
\documentclass[journal]{IEEEtran}
\ifCLASSINFOpdf
\else
\fi
\hyphenation{op-tical net-works semi-conduc-tor}
\usepackage{graphics,epstopdf,graphicx,amsthm,amsmath,amssymb,mathptmx,braket,colortbl,color,bm,framed,mathrsfs}
\definecolor{myurlcolor}{rgb}{0,0,0.9}

\newcommand{\proj}[1]{| #1\rangle\!\langle #1 |}

\newcommand{\inner}[2]{\langle #1 , #2\rangle}

\DeclareMathOperator{\trace}{Tr}
\newcommand{\Ptr}[2]{\trace_{#1}\Pa{#2}}
\newcommand{\Tr}[1]{\Ptr{}{#1}}

\newcommand{\Pa}[1]{\left[#1\right]}

\newcommand{\norm}[1]{\left\lVert #1 \right\rVert}

\theoremstyle{plain}
\newtheorem{thm}{Theorem}
\newtheorem{lem}[thm]{Lemma}
\newtheorem{prop}[thm]{Proposition}
\newtheorem{cor}[thm]{Corollary}

\newtheorem{con}[thm]{Conjecture}

\newtheorem{Def}[thm]{Definition}
\newtheorem{Rem}[thm]{Remark}

\def\ot{\otimes}
\def\complex{\mathbb{C}}




\newcommand{\CMM}{\mathcal M}

\newcommand{\be}{\begin{equation}}
\newcommand{\ee}{\end{equation}}

\renewcommand{\geq}{\geqslant}
\renewcommand{\leq}{\leqslant}
\renewcommand{\le}{\leqslant}
\begin{document}
%
\title{Quantum Ruzsa Divergence to Quantify Magic}
%
%
%

\author{Kaifeng~Bu,        Weichen~Gu,
        and~Arthur~Jaffe
\thanks{Kaifeng Bu is with the Department
Department of Mathematics, The Ohio State University, Columbus, Ohio 43210, USA,
and Department of Physics, Harvard University, Cambridge, Massachusetts 02138, USA,
e-mail: bu.115@osu.edu.}
\thanks{Weichen Gu is with Department of Mathematics, The Ohio State University, Columbus, Ohio 43210, USA, 
and Department of Mathematics and Statistics, University of New Hampshire, Durham, New Hampshire 03824, USA,
e-mail: gu.1213@osu.edu.}
\thanks{Arthur Jaffe is with Department of Physics and Mathematics,  Harvard University, Cambridge, Massachusetts 02138, USA,
e-mail: Arthur\_Jaffe@harvard.edu.}
\thanks{Manuscript received March 29, 2024; revised September 30 and December 10, 2024.}}

%
%

\markboth{IEEE TRANSACTIONS ON INFORMATION THEORY,~Vol.~XX, No.~X, XXX~2025}%
{Shell \MakeLowercase{\textit{et al.}}: Bare Demo of IEEEtran.cls for IEEE Journals}
%



\maketitle

\begin{abstract}
In this work, we investigate the behavior of quantum entropy under quantum convolution and its application in quantifying magic. 
We first establish an entropic, quantum central limit theorem (q-CLT), where the rate of convergence is bounded by the magic gap. We also introduce a new quantum divergence based on quantum convolution, called the 
quantum Ruzsa divergence, to study the stabilizer structure of quantum states.  We conjecture a ``convolutional strong subadditivity'' inequality, which leads to the triangle inequality for the quantum Ruzsa divergence. In addition, we propose two new magic measures, the quantum Ruzsa divergence of magic and quantum-doubling constant,  to quantify the amount of magic in quantum states. Finally, by using the quantum convolution, we 
extend the classical, inverse sumset theory to the quantum case. These results shed new insight into the study of the stabilizer and magic states in quantum information theory.\end{abstract}

\begin{IEEEkeywords}
Quantum entropy, quantum convolution, stabilizer states, magic states
\end{IEEEkeywords}

%
\IEEEpeerreviewmaketitle

\section{Introduction}
%
%
%
%
\IEEEPARstart{R}{uzsa} divergence (or distance) is an important topic in 
additive combinatorics and plays a crucial role in characterizing the structure of subsets and Gaussians. A prominent application of this concept is in the Freiman-Ruzsa inverse sumset theory, which investigates the properties 
 of sets $A$ such that the size (or the Shannon entropy) of $A+A$ is close to that of $A$. 
In this theory, the quantity $|A+A|/|A|$
 and its entropic counterpart $H(\mathcal{X}+\mathcal{X}')-H(\mathcal{X})$
 are utilized to analyze the characteristics of the subset $A$ 
 and the random variable $\mathcal{X}$~\cite{tao2006additive,tao_2010,KontoyiannisIEEE14,MadimanIEEE18,JiangLiuWu19,green2023sumsets,gowers2023conjecture}. Here, $\mathcal{X}, \mathcal{X}'$ represent i.i.d. copies of a random variable.
For example, by Tao's work~\cite{tao_2010}, given i.i.d. copies $\mathcal{X}$ and $\mathcal{X}'$ of a discrete random variable, if 
$H(\mathcal{X}+\mathcal{X}')-H(\mathcal{X})$ is small, then the distribution of $\mathcal{X}$ is close to the uniform distribution on a generalized arithmetic progression.
Recently, a significant advancement in this field is the proof of Marton's conjecture (or the polynomial Freiman–Ruzsa conjecture) by Gowers, Green, Manners, and Tao~\cite{gowers2023conjecture}.
The functionality of Ruzsa distance relies on sumset inequalities, which establish relationships between the size (or Shannon entropy) of the sumset $A+B$ (or random variables $\mathcal{X}$ and $\mathcal{Y}$) and that  of sets $A$ and $B$ (or $\mathcal{X}$ and $\mathcal{Y}$)~\cite{tao_2010,MadimanIEEE18,Ruzsa09,Madiman08,MadimanRandom12}.

Inspired by the classical work, we introduce the quantum Ruzsa divergence to characterize the stabilizer structure of quantum states in this work. 
Stabilizer states were first introduced by Gottesman~\cite{Gottesman97}, and now have many applications 
including quantum error correction codes~\cite{ShorPRA95,Kitaev_toric}, and the classical simulation of quantum circuits, known as Gottesman-Knill theorem~\cite{gottesman1998heisenberg}. 
These applications indicate that nonstabilizer states and circuits are necessary to achieve the quantum computational advantage. 
Later, the
extension of the Gottesman-Knill theorem beyond stabilizer circuits was further studied~\cite{BravyiPRL16,BravyiPRX16,bravyi2019simulation,BeverlandQST20,SeddonPRXQ21, bu2022classical,gao2018efficient,Bu19,UmeshSTOC23,koh2015further}.
The term ``magic'' was introduced by Bravyi and Kitaev~\cite{BravyiPRA05} to express the property that a state is not a stabilizer.
Magic thus serves as a pivotal quantum resource for understanding the 
boundary between classical and quantum computation. In this work, we explore a new method to quantify magic using the quantum Ruzsa divergence.

The fundamental concept to define quantum Ruzsa divergence is 
to use a quantum convolution that the authors recently introduced~\cite{BGJ23a,BGJ23b,BGJ23c,BJ24a}. 
This quantum convolution defined for qudit and for qubit systems is different from the convolutions defined by Audenaert,  Datta, and Ozols~\cite{Audenaert16} and the free convolution introduced by Voiculescu~\cite{voiculescu2016free}. Our convolution for $n$-qudit states depends on a given unitary, which itself is a tensor product of  $2$-qudit unitaries. 
It can be implemented with
a constant-depth quantum circuit with $2$-qudit gates, which may be realizable in the neutral-atom platform~\cite{Evered2023high}. Moreover,
this quantum convolutional framework provides a new method to understand and study stabilizer states.

One consequence of our convolutional framework is a  quantum central limit theorem for discrete-variable (DV) quantum  systems.
This means that repeated quantum convolution with any zero-mean state converges to a stabilizer state. Therefore we identify the set of stabilizer states as  the set of ``quantum-Gaussian'' states. 
There are also other reasons to consider stabilizer states as the discrete Gaussian states.  One is the 
quadratic form in the expression of computational basis, and the other is the extremality of stabilizer states, as discussed in~\cite{Bu2024a}. 

The central limit theorem (CLT) is a fundamental result in probability theory.
Given i.i.d. random variables $\mathcal{X}_i$ with zero mean and finite variance $\sigma$, the normalized 
sum $\bar{\mathcal{X}}_N=\frac{\sum^N_{i=1}\mathcal{X}_i}{\sqrt{N}}$ converges to a Gaussian random variable $\mathcal{X}_G$ with the same mean value and variance.
 The study of the entropic central limit theorem 
has a long history, tracing back to the work of Linnik~\cite{Linnik59}. Here the entropy is  the Shannon
entropy, defined as $H(\bar{\mathcal{X}}_N)=-\int f_{\bar{\mathcal{X}}_N}\log f_{\bar{\mathcal{X}}_N}$,
 where $f_{\bar{\mathcal{X}}_N}$ is the probability density function of $\bar{\mathcal{X}}_N$. Barron later showed that $H(\bar{\mathcal{X}}_N)$ converges to the $H(\mathcal{X}_G)$ as $N\to \infty$, where $\mathcal{X}_G$ is the corresponding Gaussian random variable~\cite{Barron86}. Furthermore, the rate of convergence in the entropic central limit theorem has attracted much attention~\cite{artstein2004JAMS,artstein2004PTRF,Barron04}. 

In the case of continuous-variable (CV) quantum systems,  various central limit theorems with a Gaussian limit
also have an interesting history including Cushen and Hudson \cite{Cushen71}, and related work of Hepp and Lieb \cite{Lieb73,Lieb1973}. 
Many other quantum or noncommutative versions of the central limit theorem appeared later, see~\cite{JiangLiuWu19,Giri78,Goderis89,Matsui02,Cramer10,Jaksic09,Arous13,Michoel04,GoderisPTRT89,JaksicJMP10,Accardi94,Liu16,Hayashi09,CampbellPRA13,BekerCMP21,Carbone22,beigi2023optimal}. 
For the entropic q-CLT in CV systems, the rate of convergence was given for an $m$-mode quantum state under some 
technical assumption~\cite{BekerCMP21}. In our framework, the stabilizer states are 
the quantum Gaussians, which is different from the previous studies. Hence,
we also explore the rate of convergence of the entropic quantum central limit theorem, which converges to a stabilizer state. We summarize our results in the next section.

\subsubsection{Summary of main results}
In \S~\ref{sec:q-CLT}, we establish an entropic q-CLT for DV quantum systems.  We show that the quantum relative entropy 
between the $N$-th repetition of our quantum convolution and the mean state converges to zero at 
an exponential rate. This rate is bounded from below
by the ``magic gap'' defined by the state. Note that we can also derive an entropic q-CLT directly 
from the q-CLT in Hilbert-Schmidt norm~\cite{BGJ23a} by using the relationship between the entropy and norms. However, the constants in the bound of the entropic q-CLT in this work are significantly better than those obtained from the previous approach in~\cite{BGJ23a}.

In \S~\ref{sec:RZ}, we introduce a quantum Ruzsa divergence
to study the stabilizer structure of quantum states. 
We prove the basic mathematical properties of quantum Ruzsa divergence in \S~\ref{sec:PRZ}. 
In addition, we introduce a magic measure, called quantum Ruzsa divergence of magic, to quantify 
the magic of quantum states in \S~\ref{sec:mRZ}. 
This measure can serve as a good magic measure based on the fact that quantum Ruzsa divergence can capture the  discrete Gaussian structure. This
is different from the previous ones based on distance measures, like the relative entropy of magic~\cite{Veitch14}.

To further explore the properties of quantum Ruzsa divergence, we conjecture and investigate the convolutional strong subadditivity of quantum entropy in \S~\ref{sec:CSSA}; this property would lead to the triangle inequality for quantum Ruzsa divergence.
The strong subadditivity \eqref{ineq:CSSA} that we propose is different from the strong subadditivity proved by 
Lieb and Ruskai~\cite{LiebJMP73}.  We prove this convolutional strong subadditivity in two special cases: when all the input states are either diagonal or stabilizer states.
We also consider the convolutional subadditivity, which holds for the classical convolution but not for quantum convolution.
However, 
we find that the violation of 
convolutional subadditivity implies the existence of magic in quantum states.

Finally, we use the quantum Ruzsa divergence to introduce the quantum-doubling  constant, and study the quantum inverse sumset theory to characterize the stabilizer states in \S~\ref{sec:QIST}, which may be of independent research interest. We also show that the quantum-doubling constant can serve as a magic measure that does not require the optimization over all stabilizer states.

Our method based on the Gaussian structure can be applied across various quantum computational models, including matchgate circuits~\cite{lyu2024fermionic}.
Previous measures for quantifying magic~\cite{BravyiPRL16,BravyiPRX16,bravyi2019simulation, Bu19,Veitch14,Veitch12mag,HowardPRL17,BuPRA19_stat,LeonePRL22} either involve (1) optimization over all stabilizer states, making them computationally hard, or (2) are overly specialized, applicable only to stabilizer computations.
Hence, our method addresses the shortcomings of previous measures and offers a comprehensive framework for understanding quantum advantage. Therefore, our new measures will provide significant new insights into both resource theory and quantum computation.

\section{Preliminary}
Fix natural numbers $n$ (the number of qudits) and $d$ (the degree of each qudit) and study  the  Hilbert space $\mathcal{H}^{\ot n}$, where $\mathcal{H} = \complex^d$. 
Consider an orthonormal basis $\set{\ket{k}: k\in \mathbb{Z}_d}$ for the Hilbert space $\mathcal{H}$; here $\mathbb{Z}_{d}$ denotes the cyclic group of order $d$. One calls these vectors the \textit{computational basis}.  
The Pauli operators $X$ and $Z$ are defined by 
\be
X: |k\rangle\mapsto |k+1\rangle\;,
\qquad
Z: |k\rangle \mapsto\omega^k_d|k\rangle\;,
\qquad
\forall k\in \mathbb{Z}_d\;, 
\ee
where $\omega_d=\exp(2\pi i /d)$ is the primitive $d$-th root of unity. 
 We restrict $d$ to be prime in order to define our quantum convolution.  

The local Weyl operators (or generalized Pauli operators)
are 
\[
w(p,q)=\zeta^{-pq}\, Z^{p} X^{q}\;,
\quad\text{where}\quad
\zeta = \left\{
\begin{aligned}
&\omega^{(d+1)/2}_{d}\;, &\text{for }d \text{ odd}\hfill\\
&e^{i\pi/2}\;, &\text{for }d=2\hfill
\end{aligned}
\right.\;.
\]

For the $n$-qudit system on $\mathcal{H}^{\otimes n}$, the Weyl operators are defined as
 \begin{eqnarray}\label{WeylOperators}
w(\vec p, \vec q)
=w(p_1, q_1)\ot...\ot w(p_n, q_n)
=w(-\vec p,-\vec q)^{\dagger}\;,
\end{eqnarray}
with $\vec p=(p_1, p_2,..., p_n)\in \mathbb{Z}^n_d$, $\vec q=(q_1,..., q_n)\in \mathbb{Z}^n_d $. 
Let us denote $\mathcal{P}_n$ as the group generated by the Weyl operators and 
phase $\zeta $.
The Weyl operators are an orthonormal basis for the space of linear operators on $\mathcal{H}^{\ot n}$
 with respect to the inner product 
 \begin{align}
      \inner{A}{B}=\frac{1}{d^n}\Tr{A^\dag B}\;.
 \end{align}
Denote $V^n:=\mathbb{Z}^n_d\times \mathbb{Z}^n_d$; this represents the phase space for $n$-qudit systems, as was studied in~\cite{Gross06}. 
Let $D(\mathcal{H}^{\ot n})$ denote the set of  all quantum states on $\mathcal{H}^{\ot n}$, namely positive matrices with unit trace.

\begin{Def}[\bf Characteristic function]
The characteristic function $\Xi(\vec p, \vec q)$ of a state $\rho\in D(\mathcal{H}^{\ot n})$ is the  coefficient of $\rho$ in the Weyl basis,
\begin{equation*}
\Xi_{\rho}(\vec{p},\vec q)
=\Tr{\rho w(-\vec{p},-\vec q)}\;,
\end{equation*}
and
\begin{equation*}
\rho=
\frac{1}{d^n}
\sum_{(\vec{p},\vec q)\in V^n}
\Xi_{\rho}(\vec{p},\vec q)\ w(\vec{p},\vec q)\;.
\end{equation*}
\end{Def}
The process of taking characteristic functions is the quantum Fourier transform that we consider.
More details about the properties of the characteristic functions can be found in 
\cite{BGJ23b,Gross06,montanaro2010quantum}.
In this work, we also use 
 $\Xi_{\rho}(\vec x)$ with $\vec x=(\vec p, \vec q)\in V^n$ and the expectation
$$
\mathbb{E}_{k_i\in \mathbb{Z}_d}(\ \cdot \ ):=\frac{1}{d}\sum_{k_i\in\mathbb{Z}_d}(\ \cdot \ )\;.
$$

\begin{Def}[\bf Stabilizer states (Equation(7) in~\cite{Gottesman96} and Chapter 3 in~\cite{Gottesman97})]
A pure stabilizer state $\proj{\psi}$ for an $n$-qudit system  is the projection onto a  stabilizer vector $\ket{\psi}$, namely a common unit eigenvector of an abelian (stabilizer) subgroup of the 
Weyl operators of size $d^n$. That is, 
if the generators of the stabilizer group are 
$\set{g_1,...,g_n}_{i\in [n]}$ with $g_i\in \mathcal{P}_n$, then  
the pure state $\ket{\psi}$ satisfying 
\begin{align}
    g_i\ket{\psi}=\ket{\psi}, \forall i\in [n],
\end{align}
is called a pure stabilizer state. 
And the corresponding density matrix $\proj{\psi}$ can be written as
\begin{eqnarray*}
\proj{\psi}=
\Pi^n_{i=1}\mathbb{E}_{k_i\in \mathbb{Z}_d}g^{k_i}_i\;.
\end{eqnarray*}
A general stabilizer state is a mixed state $\rho$ obtained as a convex linear combination of pure stabilizer states; we denote the set of stabilizer states by  STAB.
\end{Def}

\begin{Def}[\bf Minimal stabilizer-projection state]\label{def:MSPS}
A quantum state $\rho$ is a  {minimal stabilizer-projection state} (MSPS) associated with an abelian subgroup generated by $\set{g_i}_{i\in[r]}$ with $g_i\in \mathcal{P}_n$, if it has the following form 
\begin{eqnarray*}
\rho=
\frac{1}{d^{n-r}}\Pi^r_{i=1}\mathbb{E}_{k_i\in \mathbb{Z}_d}g_i^{k_i}\;.
\end{eqnarray*}
\end{Def}

For example, in an $n$-qudit system and with the abelian group  generated by $\set{Z_1,...,Z_{n-1}}$ where $Z_i$ is the Pauli $Z$ operator on the $i$-th qudit,
the states $\set{\frac{1}{d}\proj{\vec j}\ot I}_{\vec j\in\mathbb{Z}^{n-1}_d}$ are MSPSs. 

\begin{Rem}
Note that 
stabilizer states and their mixed-state generalizations are regarded as the discrete analogs of quantum Gaussian states in this work. This characterization stems from several key properties. For instance, the coefficients of pure stabilizer states in their computational basis decomposition exhibit a quadratic form. Specifically, such states on an $n$-qudit system with odd prime $d$ can be expressed as:
$$\ket{\psi}=\frac{1}{\sqrt{d^n}}\sum_{\vec x\in \mathbb{Z}^n_d}\omega^{f_2(\vec x)}_d\xi_{A\vec x=\vec b}\ket{\vec x},$$ where $f_2$ represents a quadratic function and $\xi_{A\vec{x}=\vec{b}}$ is an indicator function that validates the condition $A\vec{x} = \vec{b}$. For the general $d$, the pure stabilizer 
states still have the above quadratic form with some modification (See Theorem 9 in\cite{garcia2014geometry} and Theorem 13 in~\cite{bu2022classical}). 
Due to the quadratic form, another proof of efficient classical simulation of Clifford circuits has been proposed based on the Gauss sum~\cite{Koh2017computing,bu2022classical}. In addition, the stabilizer states (or MSPSs) usually have extremality, such as the result shown in Lemma~\ref{lem:ent_equ}.  Moreover, there are uncertainty principles that are only saturated by stabilizer states ( See Theorem 1 and 2 in~\cite{Bu2024a}). This is a fundamental property satisfied by Gaussian distributions. These observations made us believe that the stabilizer states (or MSPSs)
 can be regarded as quantum Gaussian states on DV quantum systems.

\end{Rem}

\begin{Def}[\bf Clifford unitary]
An  $n$-qudit unitary $U$ is a Clifford unitary, if 
$Uw(\vec x) U^\dag $ is also a Weyl operator up to a phase for any  $\vec x\in V^n$.
\end{Def}
It is easy to see that Clifford unitaries map stabilizer states to stabilizer states.

\subsection{Basic properties of quantum convolution}
Let us first review the basic knowledge in the framework of quantum convolution. 
\begin{Def}[\bf Mean state (MS) (Definition 3 in \cite{BGJ23a})]\label{def:mean_state}
Given an $n$-qudit state $\rho$,  the mean state  $\mathcal{M}(\rho)$ is the 
operator with the characteristic function: 
\begin{align}\label{0109shi6}
\Xi_{\mathcal{M}(\rho)}(\vec x) :=
\left\{
\begin{aligned}
&\Xi_\rho ( \vec x) , && |\Xi_\rho ( \vec x)|=1,\\
& 0 , && |\Xi_\rho (  \vec x)|<1.
\end{aligned}
\right.
\end{align}
The mean state $\mathcal{M}(\rho)$ is a stabilizer state because 
 the support of the characteristic function forms an abelian group.
\end{Def}
In addition, $\mathcal{M}(\rho)$ has a stabilizer group, i.e., the abelian group generated by the
Pauli operator $w(\vec x)$ such that $w(\vec x)\mathcal{M}(\rho) w(\vec x)^\dag=\mathcal{M}(\rho)$. 
For simplicity, we denote it as $G_{\rho}$.

\begin{Def}[\bf Zero-mean state (Definition 23 in~\cite{BGJ23a})]\label{Def:Zero_mean}
A given  $n$-qudit state $\rho$ has zero-mean, if 
the characteristic function of $\mathcal{M}(\rho)$ takes values in $\set{0, 1}$. 
\end{Def}
Note that, if $\rho$ is not a zero-mean state, there exists a Weyl operator $w(\vec x)$ such that $w(\vec x)\rho w(\vec x)^\dag$
is a zero-mean state (see Lemma 90 in \cite{BGJ23b}).

\begin{Def}[\bf Magic gap (Definition 6 in~\cite{BGJ23a})]\label{def:ma_gap}
Given an $n$-qudit state $\rho\in\mathcal{D}(\mathcal{H}^{\ot n})$ for any integer $d\geq 2$, the  magic gap  of $\rho$ is 
\begin{eqnarray*}
MG(\rho)=1-\max_{\vec x\in  \text{Supp}(\Xi_{\rho}): |\Xi_{\rho}(\vec x)|\neq 1}|\Xi_{\rho}(\vec x)|\;.
\end{eqnarray*}
If $\set{\vec x\in  \text{Supp}(\Xi_{\rho}): |\Xi_{\rho}(\vec x)|\neq 1}=\emptyset$, define  $MG(\rho)=0$, i.e., there is no gap on the support of the characteristic function.
\end{Def}

\begin{Rem}
We refer to the difference between the largest and second-largest absolute values of the Fourier coefficients as the "magic gap" because it can be used as a measure of magic in quantum computation~(See Proposition 7 in~\cite{BGJ23a}). That is, the magic gap is always positive, and equal to $0$ iff $\rho$ is a stabilizer state, and invariant under Clifford unitary. 
Additionally, magic gap can be used to provide a lower bound on the number of $T$ gates in the quantum computation with  Clifford and $T$ gates, which provides an operational interpretation of magic gap (See Proposition 8 in~\cite{BGJ23a}).
\end{Rem}

\begin{Def}[\bf Quantum convolution (Definition 20 in \cite{BGJ23a})]\label{def:conv}
    Let $s^2+t^2\equiv 1 \mod d$, with $s,t\neq0$, and let $U_{s,t}$ be the unitary
\begin{align}\label{1231shi1}
 U_{s,t} = \sum_{\vec i,\vec j\in \mathbb{Z}^n_d} |s\vec i+t\vec j \rangle \langle \vec i| \otimes |- t\vec i+s\vec j\rangle \langle \vec j|\;,
 \end{align}
 acting on the $2n$-qudit sytems $\mathcal{H}_A\ot \mathcal{H}_B$ with $\mathcal{H}_A=\mathcal{H}_B=\mathcal{H}^{\ot n}$, and
 the vector $| \vec i \rangle = |  i_1 \rangle \otimes \cdots \otimes |  i_n \rangle \in \mathcal{H}^{\otimes n} $.
The convolution  of two $n$-qudit states $\rho$ and $\sigma$ is 
\begin{align}\label{eq:conv_B}
\rho \boxtimes_{s,t} \sigma = \Ptr{B}{ U_{s,t} (\rho \otimes \sigma) U^\dag_{s,t}}.
\end{align}
\end{Def}

For completeness, we list some useful properties of quantum convolution that we may use throughout.
\begin{lem}\label{lem:key_tech}
The quantum convolution $\boxtimes_{s,t}$ satisfies the following properties:
     \begin{enumerate}
    \item {\bf Convolution-multiplication duality (Proposition 81 in \cite{BGJ23b}):} $\Xi_{\rho\boxtimes_{s,t}\sigma}(\vec x)=\Xi_{\rho}(s\vec x)\Xi_{\sigma}(t\vec x)$, for any $\vec x\in V^n$
    \item  {\bf Convolutional stability (Proposition 45 in \cite{BGJ23b}):} If both $\rho$ and $\sigma$ are stabilizer states,  then 
$\rho\boxtimes_{s,t}\sigma$ is a stabilizer state.
    \item {\bf Quantum central limit theorem (Theorem 91 in \cite{BGJ23b}):} The iterated convolution $\boxtimes^N_{s,t}\rho$   of a zero-mean state $\rho$ converges to $\mathcal{M}(\rho)$ as $N\to \infty$, 
    where $ \boxtimes^{N+1}_{s,t}\rho=(\boxtimes^N_{s,t}\rho)\boxtimes_{s,t} \rho$
  and  $\boxtimes^0_{s,t}\rho=\rho$.
\item {\bf Quantum maximal entropy principle (Theorem 17 in \cite{BGJ23b}):} $S(\rho)\leq S(\mathcal{M}(\rho))$.
\item{\bf Commutativity with Clifford unitaries (Lemma 85 in \cite{BGJ23b}): }For any Clifford unitary $U$, there exists some Clifford unitary $V$ such
that $(U\rho U^\dag)\boxtimes_{s,t} (U\sigma U^\dag)=V(\rho\boxtimes_{s,t} \sigma)V^\dag$ for any input states $\rho$ and $\sigma$.
\item{\bf Commutativity with Weyl operators (Proposition 35 in \cite{BGJ23b} ):} For any Weyl operators $w(\vec x)$ and $w(\vec y)$, we have
$\mathcal{E}\left(w(\vec x)\ot w(\vec y)\rho_{AB}w(\vec x)^\dag\ot w(\vec y)^\dag\right)
=w(s\vec x+t\vec y)\mathcal{E}(\rho_{AB})w(s\vec x+t\vec y)^\dag$
where $\mathcal{E}(\rho_{AB})=\Ptr{B}{U_{s,t}\rho_{AB}U^\dag_{s,t}}$. 
\end{enumerate}
\end{lem}
For simplicity we use $\boxtimes$ to denote the quantum convolution $\boxtimes_{s,t}$ 
for any chosen nonzero $s,t$ in this work. Besides, the comparison between our quantum convolution and the bosonic ones 
can be found in Tables 2 and 3 in~\cite{BGJ23a}.

Note that, for qubit systems, i.e., the local dimension $d=2$,  we change  quantum convolution  to the  Definition 7 of~\cite{BGJ23c}, as
there 
is no nontrivial choice of $s,t\in \mathbb{Z}_2$ such that $s^2+t^2\equiv 1\mod 2$ in Definition~\ref{def:conv}. The qubit convolution in \cite{BGJ23c}
also satisfies the properties in the above Lemma \ref{lem:key_tech} as proved in Theorem 18 in~\cite{BGJ23c}.

\subsection{Basic properties of quantum entropy}
Now, let us review some basic properties of quantum entropy and relative entropy.

\begin{Def}[\bf Quantum entropy]\label{def:Qent}
 Given a quantum state $\rho$, 
 the von Neumann entropy is 
 \begin{eqnarray}
     S(\rho):=-\Tr{\rho\log\rho}.
 \end{eqnarray}
 Given a parameter $\alpha\in [0,+\infty]$,
 the quantum R\'enyi entropy is
\begin{eqnarray}
S_{\alpha}(\rho)
:=\frac{1}{1-\alpha}\log \Tr{\rho^\alpha}.
\end{eqnarray}    
\end{Def}
Note that $\lim_{\alpha\to 1}S_{\alpha}(\rho)=S(\rho)$. Also 
$S_{\infty}(\rho)=\lim_{\alpha\to \infty}S_{\alpha}(\rho)=-\log\lambda_{\max}$, where
$\lambda_{\max}$ is the largest eigenvalue of $\rho$.
\begin{Def}[\bf Quantum relative entropy]
The relative entropy of $\rho$ with respect to $\sigma$ is 
\begin{eqnarray}
D(\rho||\sigma):=\Tr{\rho(\log\rho-\log\sigma)}\;.
\end{eqnarray}
Given a parameter $\alpha\in [0,+\infty]$,
 the sandwiched quantum R\'enyi relative entropy $D_{\alpha}$ is 
\begin{eqnarray*}
    D_{\alpha}(\rho||\sigma):=\frac{1}{\alpha-1}\log\Tr{\left(\sigma^{\frac{1-\alpha}{2\alpha}}\rho\sigma^{\frac{1-\alpha}{2\alpha}}\right)^{\alpha}}.
\end{eqnarray*}
\end{Def}

Note that $\lim_{\alpha\to 1}D_{\alpha}(\rho||\sigma)=D(\rho||\sigma)$, and 
$$D_{\infty}(\rho||\sigma)=\lim_{\alpha\to +\infty}D_{\alpha}(\rho||\sigma)= \min \log\set{\lambda:\rho\leq \lambda\sigma}\;.$$
One fundamental result  is that 
the mean state $\mathcal{M}(\rho)$ is the closest MSPS to the given state
$\rho$ by using the quantum relative entropy as a distance measure.

\begin{lem}[Theorem 4 in\cite{BGJ23a} and Theorem 17 in~\cite{BGJ23b}]\label{lem:ent_equ}
    Given an $n$-qudit state $\rho$, we have 
\begin{align*}
\min_{\sigma\in MSPS}D_{\alpha}(\rho||\sigma)=
D_{\alpha}(\rho|| \CMM(\rho)) =S_{\alpha}(\CMM(\rho))-S_{\alpha}(\rho) \;,
\end{align*}
for any $\alpha\geq 1$.
\end{lem}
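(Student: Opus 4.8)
The plan is to prove the two equalities separately. The harder (and more structural) one is the middle equality $\min_{\sigma\in\mathrm{MSPS}}D_\alpha(\rho\|\sigma)=D_\alpha(\rho\|\CMM(\rho))$; the rightmost identity $D_\alpha(\rho\|\CMM(\rho))=S_\alpha(\CMM(\rho))-S_\alpha(\rho)$ should follow once one observes that $\rho$ and $\CMM(\rho)$ commute and that $\CMM(\rho)$ acts as a multiple of a projection on the range of $\rho$.

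\textbf{Step 1: Structural facts about an MSPS.} First I would record that if $\sigma$ is an MSPS associated to an abelian group generated by $\{g_i\}_{i\in[r]}$, then $\sigma=\tfrac{1}{d^{n-r}}\Pi^r_{i=1}\mathbb{E}_{k_i}g_i^{k_i}$ is (a scalar multiple of) the orthogonal projection $P_\sigma$ onto the joint $+1$-eigenspace of the $g_i$, with $\trace P_\sigma=d^{n-r}$, so $\sigma=P_\sigma/\trace(P_\sigma)$ is the maximally mixed state on that subspace; in particular $S_\alpha(\sigma)=\log\trace(P_\sigma)$ for all $\alpha$, and $\sigma^{\frac{1-\alpha}{2\alpha}}=(\trace P_\sigma)^{-\frac{1-\alpha}{2\alpha}}P_\sigma$ on its support. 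I would also note the characteristic-function description: $\Xi_\sigma(\vec x)\in\{0,1\}$, and $\Xi_\sigma(\vec x)=1$ exactly on the group $G_\sigma\subseteq V^n$ dual to the stabilizer. The key compatibility input, quoted from \cite{BGJ23a,BGJ23b}, is that $\CMM(\rho)$ is itself such a state with group $G_\rho=\{\vec x:|\Xi_\rho(\vec x)|=1\}$, and that $D_\alpha(\rho\|\sigma)<\infty$ forces $\mathrm{supp}(\rho)\subseteq\mathrm{supp}(\sigma)$, hence $\Xi_\sigma(\vec x)=1\Rightarrow |\Xi_\rho(\vec x)|=1$ (a Weyl operator fixing the support of $\sigma$ must have $|\Xi_\rho|=1$ on it). Thus any finite-divergence MSPS $\sigma$ has $G_\sigma\subseteq G_\rho$, i.e. $P_\sigma\geq P_{\CMM(\rho)}$.

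\textbf{Step 2: Compute $D_\alpha(\rho\|\sigma)$ for an MSPS and minimize.} Using $\sigma=P_\sigma/\trace(P_\sigma)$, for $\alpha>1$ one gets
\begin{align*}
D_\alpha(\rho\|\sigma)=\frac{1}{\alpha-1}\log\trace\Bigl[(\trace P_\sigma)^{\frac{\alpha-1}{\alpha}}\bigl(P_\sigma\rho P_\sigma\bigr)^\alpha\Bigr]=\log\trace(P_\sigma)+\frac{1}{\alpha-1}\log\trace\bigl[(P_\sigma\rho P_\sigma)^\alpha\bigr].
\end{align*}
Since $\mathrm{supp}(\rho)\subseteq\mathrm{supp}(\sigma)$ we have $P_\sigma\rho P_\sigma=\rho$, so $D_\alpha(\rho\|\sigma)=\log\trace(P_\sigma)-S_\alpha(\rho)=S_\alpha(\sigma)-S_\alpha(\rho)$; the $\alpha=1$ case is the same by continuity (or directly, $D(\rho\|\sigma)=-S(\rho)+\log\trace(P_\sigma)$ since $-\trace(\rho\log\sigma)=\log\trace(P_\sigma)$). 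Now minimizing over MSPS $\sigma$ with finite divergence amounts to minimizing $\trace(P_\sigma)=\dim(\mathrm{range}\,P_\sigma)$ subject to $P_\sigma\geq P_{\CMM(\rho)}$, whose minimizer is $P_\sigma=P_{\CMM(\rho)}$, i.e. $\sigma=\CMM(\rho)$. This gives the first equality and simultaneously the formula $D_\alpha(\rho\|\CMM(\rho))=S_\alpha(\CMM(\rho))-S_\alpha(\rho)$, which is the second equality.

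\textbf{Main obstacle.} The routine algebra above hides the one genuinely nontrivial point: justifying that a finite R\'enyi divergence to an MSPS really does force $\mathrm{supp}(\rho)\subseteq\mathrm{supp}(\sigma)$ \emph{and} that this support containment is equivalent to the group inclusion $G_\sigma\subseteq G_\rho$ — i.e. translating the operator-support statement into the characteristic-function/stabilizer-group language. This is where one must invoke the structure theory of MSPS from \cite{BGJ23a,BGJ23b} (that $\mathrm{supp}(\sigma)$ is exactly the common eigenspace cut out by $G_\sigma$, and that $\rho$ supported there is equivalent to $w(\vec x)\rho w(\vec x)^\dagger=\rho$ for all $\vec x\in G_\sigma$, which by definition of $\CMM$ means $\Xi_\rho$ has modulus one on $G_\sigma$). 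Once that dictionary is in place, the minimization over a lattice of nested projections is immediate and the rest is bookkeeping; I would also double-check the boundary cases $\alpha=1$ and $\alpha=\infty$ separately since the closed-form for $D_\alpha$ degenerates there, though both follow by the same support argument together with $S_\infty(\sigma)=\log\trace(P_\sigma)$.
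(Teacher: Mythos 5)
The paper does not actually prove Lemma~\ref{lem:ent_equ}: it is quoted from \cite{BGJ23a,BGJ23b} with no proof given, so there is no in-paper argument to compare yours against. Judged on its own, your reconstruction is sound and is almost certainly the intended one. The decisive observations --- that an MSPS is the maximally mixed state on the range of a stabilizer-group projection $P_\sigma$, that $D_\alpha(\rho\|\sigma)<\infty$ for $\alpha\ge 1$ forces $\mathrm{supp}(\rho)\subseteq\mathrm{range}(P_\sigma)$, that under this containment $D_\alpha(\rho\|\sigma)=\log\trace(P_\sigma)-S_\alpha(\rho)$ for every $\alpha\in[1,\infty]$, and that the minimal admissible projection is $P_{\CMM(\rho)}$ --- are exactly what make all three quantities coincide. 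Two small points to tighten. First, in your displayed computation the scalar prefactor should be $(\trace P_\sigma)^{\alpha-1}$ rather than $(\trace P_\sigma)^{\frac{\alpha-1}{\alpha}}$, since the power $\alpha$ acts on the whole sandwiched operator $\bigl((\trace P_\sigma)^{\frac{\alpha-1}{\alpha}}P_\sigma\rho P_\sigma\bigr)^{\alpha}$; your final formula is consistent with the corrected exponent, so this is only a typo. Second, the ``dictionary'' step you flag as the main obstacle does need the phase bookkeeping you allude to: $|\Xi_\rho(\vec x)|=1$ for all $\vec x\in G_\sigma$ only places $\mathrm{supp}(\rho)$ in \emph{some} joint eigenspace of $G_\sigma$, and one must check that the eigenvalues match those defining $P_\sigma$ (equivalently, that $\Xi_\rho$ agrees with $\Xi_\sigma$, not merely in modulus, on $G_\sigma$) to conclude $\mathrm{supp}(\rho)\subseteq\mathrm{range}(P_\sigma)$; this is also what guarantees $\CMM(\rho)$ itself is an admissible minimizer, since its characteristic function is defined to carry the phases of $\Xi_\rho$ on $G_\rho$. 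With that caveat made explicit (and it is supplied by the MSPS structure theory in \cite{BGJ23a,BGJ23b}), the argument is complete.
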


\section{Entropic q-CLT for DV systems}\label{sec:q-CLT}
First, we give our main results on the entropic q-CLT, where the limit states are stabilizer states (or MSPSs).
Let us denote 
\begin{align}
    \boxtimes^{N+1}\rho=(\boxtimes^N\rho)\boxtimes \rho,~~\text{and} ~~\boxtimes^0\rho=\rho,
\end{align}
where $\boxtimes$ is short for the quantum convolution $\boxtimes_{s,t}$ for any chosen nonzero $s,t$ (see Definition \ref{def:conv}).
We  provide
 an upper bound on the rate of convergence in the 
entropic q-CLT in terms of the magic gap, where the rate of convergence is exponentially small with respect to the number of convolutions. 

\subsection{Entropic q-CLT for qudit systems}
\begin{thm}[\bf Entropic q-CLT]\label{thm:qclt2}
Given an $n$-qudit state $\rho$, the quantum relative entropy of $\boxtimes^N\rho$ with respect to the 
mean state $\mathcal{M}(\boxtimes^N\rho)$ has the following bound,
\begin{align}
      \nonumber   & D(\boxtimes^N\rho||\mathcal{M}(\boxtimes^N\rho))\\
   \nonumber     =&S(\mathcal{M}(\rho))-S(\boxtimes^N\rho)\\
    \leq& \log\left[1+(1-MG(\rho))^{2N}\left(\Tr{\rho^2}R(\rho)-1\right)
    \right]\\
     \nonumber    \leq & (1-MG(\rho))^{2N}\left(\Tr{\rho^2}R(\rho)-1\right)\;,\\
     \nonumber\to& 0, ~\text{as} ~N\to \infty,
\end{align}
where $R(\rho)$ is the rank of the state $\mathcal{M}(\rho)$.
\end{thm}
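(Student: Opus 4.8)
The plan is to relate the entropy deficit $S(\mathcal{M}(\rho))-S(\boxtimes^N\rho)$ to a Rényi-$2$ quantity that is easy to control under repeated convolution, then exploit the contraction of the characteristic function off the stabilizer support. First I would observe that by Lemma~\ref{lem:ent_equ} we have $S(\mathcal{M}(\rho))-S(\boxtimes^N\rho)=D(\boxtimes^N\rho\,\|\,\mathcal{M}(\boxtimes^N\rho))$, and since relative entropy is bounded above by the Rényi-$2$ relative entropy $D_2$, it suffices to bound $D_2(\boxtimes^N\rho\,\|\,\mathcal{M}(\boxtimes^N\rho))$. Because $\mathcal{M}(\boxtimes^N\rho)$ is (Clifford-equivalent to) an MSPS $\frac{1}{R(\rho)}\Pi$ for a projection $\Pi$ of rank $R(\rho)$ onto the range determined by the stabilizer group $G_\rho$, and $\Pi$ commutes with $\boxtimes^N\rho$, a direct computation gives $\exp(D_2(\boxtimes^N\rho\,\|\,\mathcal{M}(\boxtimes^N\rho)))=R(\rho)\Tr{(\boxtimes^N\rho)^2}$. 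Writing the purity in the Weyl basis, $\Tr{(\boxtimes^N\rho)^2}=\frac{1}{d^n}\sum_{\vec x}|\Xi_{\boxtimes^N\rho}(\vec x)|^2$, and splitting the sum into the indices $\vec x\in G_\rho$ (where $|\Xi|=1$) and the rest.

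The key structural fact, established in the earlier $L_2$ q-CLT work \cite{BGJ23a,BGJ23b,BGJ23c}, is the multiplicativity of the characteristic function under convolution together with the fact that each convolution step multiplies the off-support values by a factor bounded by $1-MG(\rho)$; more precisely $|\Xi_{\boxtimes^N\rho}(\vec x)|\leq (1-MG(\rho))^{N-1}|\Xi_\rho(\vec x')|$ for the appropriate transported index $\vec x'$, for $\vec x$ outside $G_\rho$. Hence
\begin{align*}
\frac{1}{d^n}\sum_{\vec x\notin G_\rho}|\Xi_{\boxtimes^N\rho}(\vec x)|^2
\leq (1-MG(\rho))^{2N-2}\cdot\frac{1}{d^n}\sum_{\vec x\notin G_\rho}|\Xi_\rho(\vec x)|^2
\leq (1-MG(\rho))^{2N-2}\left(\Tr{\rho^2}-\frac{|G_\rho|}{d^n}\right).
\end{align*}
Since $\frac{1}{d^n}\sum_{\vec x\in G_\rho}|\Xi_{\boxtimes^N\rho}(\vec x)|^2=\frac{|G_\rho|}{d^n}=\frac{1}{R(\rho)}$ (the support has size $d^n/R(\rho)$ and all values have modulus $1$), multiplying through by $R(\rho)$ yields $R(\rho)\Tr{(\boxtimes^N\rho)^2}\leq 1+(1-MG(\rho))^{2N-2}(R(\rho)\Tr{\rho^2}-1)$. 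Taking logarithms gives the first displayed inequality, and $\log(1+u)\leq u$ gives the second; the final limit is immediate since $1-MG(\rho)<1$ whenever the relevant index set is nonempty (and the bound is trivially $0$ otherwise).

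The main obstacle is making precise the claim that one convolution step contracts the off-support characteristic-function values by exactly $1-MG(\rho)$, and that iterating gives the power $2N-2$ rather than $2N$: this requires carefully tracking how the symplectic action of $U_{s,t}$ permutes phase-space points, identifying which orbits stay off $G_\rho$, and checking that the stabilizer group $G_\rho$ (equivalently the support structure of $\mathcal{M}(\rho)$) is preserved up to the Clifford equivalence already noted in the proof of Theorem~\ref{thm:qclt1}. I expect this to be a citation-plus-short-argument to \cite{BGJ23a,BGJ23b}, where the multiplicative/contractive behavior of $\Xi$ under $\boxtimes$ was already worked out for the $L_2$ statement; the only genuinely new content here is the reduction of the entropic deficit to the purity via $D\leq D_2$ and Lemma~\ref{lem:ent_equ}, which is routine. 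A secondary point to handle cleanly is the identity $\exp(D_2(\tau\|\frac{1}{R}\Pi))=R\Tr{\tau^2}$ when $\Pi\tau=\tau$; this follows directly from the definition of $D_2$ with $\sigma=\frac{1}{R}\Pi$ and $\sigma^{-1/2}\tau\sigma^{-1/2}$ interpreted on the support.
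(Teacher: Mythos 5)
Your proposal is correct and follows essentially the same route as the paper: the paper bounds $S(\mathcal{M}(\rho))-S(\boxtimes^N\rho)$ by $S(\mathcal{M}(\rho))-S_2(\boxtimes^N\rho)$ via monotonicity of the R\'enyi entropies, which (since $\mathcal{M}(\rho)$ is flat of rank $R(\rho)$) is exactly your quantity $\log\bigl(R(\rho)\Tr{(\boxtimes^N\rho)^2}\bigr)=D_2$, and then splits the purity over $G_\rho$ and its complement and applies the same contraction $|\Xi_{\boxtimes^N\rho}(\vec x)|^2\leq (1-MG(\rho))^{2N-2}|\Xi_\rho(\vec x)|^2$ off the support. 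The two arguments are the same up to the cosmetic choice of phrasing the reduction as $D\leq D_2$ versus $S\geq S_2$.
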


\begin{proof}
Since $\mathcal{M}(\rho)$ is a MSPS, 
then by 
Proposition 10 in \cite{BGJ23b}, $\mathcal{M}(\rho)$ 
is a projection $P$ onto some subspace $\mathcal{H}_S$ up to some normalization. 
Hence, its rank is the dimension of (hence the trace of)  the projector $P$.
Let us assume that $G_{\rho}$ is the stabilizer group of $\mathcal{M}(\rho)$, 
and assume a generating set of $G_\rho$ is $\{g_1, ..., g_r\}$,
where $g_1,...,g_r$ are Weyl operators up to a phase.
Then $G_\rho$ is abelian, $|G_\rho| = d^r$, and
\[P = \prod_{j=1}^r \left(\mathbb{E}_{l_j=1}^d g_j^{l_j} \right). \]
Hence, $\trace[P] = \frac{d^n}{|G_{\rho}|}$,
i.e., 
$R(\rho)=\frac{d^n}{|G_{\rho}|}$. Therefore $S(\mathcal{M}(\rho))=\log R(\rho)=\log\frac{d^n}{|G_{\rho}|} $. 
For simplicity, let us take
$ \lambda=(1-MG(\rho))^2= \max_{\vec x\in  \text{Supp}(\Xi_{\rho}): |\Xi_{\rho}(\vec x)|\neq 1}|\Xi_{\rho}(\vec x)|^2$.
Hence
\begin{align*}
 &D(\boxtimes^N\rho||\mathcal{M}(\boxtimes^N\rho))\\
 =&S(\mathcal{M}(\rho))-S(\boxtimes^N\rho)\\
 \leq& S(\mathcal{M}(\rho))-S_2(\boxtimes^N\rho)\\
    =&\log \frac{d^n}{|G_{\rho}|}
    +\log \left(\frac{|G_{\rho}|}{d^n}+\frac{1}{d^n}\sum_{\vec x \notin G}|\Xi_{\boxtimes^N\rho}(\vec x)|^2\right)\\
    \leq& \log \frac{d^n}{|G_{\rho}|}
    +\log \left(\frac{|G_{\rho}|}{d^n}+\frac{\lambda^{N}}{d^n}\sum_{\vec x \notin G}|\Xi_{\rho}(\vec x)|^2\right)\\
    =& \log \frac{d^n}{|G_{\rho}|}
    +\log \left[\frac{|G_{\rho}|}{d^n}+\lambda^{N}\left(\Tr{\rho^2}-\frac{|G_{\rho}|}{d^n}\right)\right]\\
    =&\log\left[1+\lambda^{N}\left(\Tr{\rho^2}\frac{d^n}{|G_{\rho}|}-1\right)
    \right],
\end{align*}
where the second line comes from Lemma \ref{lem:ent_equ}, the third line comes from the monotonicity 
of R\'enyi entropy $S(\rho)\geq S_2(\rho)$, the fifth line comes from the definition of $\lambda$ 
and convolution-multiplication duality, i.e.,
\begin{align}
\Xi_{\boxtimes^N\rho}(\vec x) = \Xi_{\rho}(s^N\vec x) \Xi_{\rho}(s^{N-1}t\vec x) \Xi_{\rho}(s^{N-2}t\vec x)\cdots \Xi_{\rho}(st\vec x)\Xi_{\rho}(t\vec x),
\end{align}
and the fourth  and  sixth lines come from the fact that 
\begin{eqnarray*}
\Tr{\rho^2}
   =\frac{1}{d^n}\sum_{\vec x}|\Xi_{\rho}(\vec x)|^2
   =\frac{|G_{\rho}|}{d^n}+\frac{1}{d^n}\sum_{\vec x\notin G_{\rho}}|\Xi_{\rho}(\vec x)|^2.
\end{eqnarray*}   
\end{proof}

By the entropic q-CLT, we have the following q-CLT based on trace distance as a corollary.
\begin{cor}\label{cor:trace_CLT}
Given an $n$-qudit state $\rho$ with zero mean, we have 
\begin{align}
  \nonumber  &\norm{\boxtimes^N\rho-\mathcal{M}(\rho)}_1\\
    \leq& \sqrt{2 \log\left[1+(1-MG(\rho))^{2N}\left(\Tr{\rho^2}R(\rho)-1\right)
    \right]}\\
    \leq& 
    \sqrt{2}(1-MG(\rho))^{N}\sqrt{\left(\Tr{\rho^2}R(\rho)-1\right)},\quad \text{as}~~ N\to \infty.
\end{align}
\end{cor}

\begin{proof}
    This is a direct corollary of the quantum Pinsker inequality 
  $  
        \frac{1}{2}\norm{\rho-\sigma}^2_1
        \leq D(\rho||\sigma)
 $,
    and the Theorem~\ref{thm:qclt2}.
\end{proof}

\begin{Rem}
Now, let us discuss the advantage of our bound on the rate of convergence in the entropic q-CLT and the trace distance version. 
Previously, there was control over the rate of convergence for the $L_2$ norm, as shown in earlier work~\cite{BGJ23a}:
 $\norm{\boxtimes^N\rho-\mathcal{M}(\rho)}\leq (1-MG(\rho))^N\norm{\rho-\mathcal{M}(\rho)}_2$. 
 Using the relationship between norms on $n$-qudit system $\norm{\cdot}_1\leq \sqrt{d^n}\norm{\cdot}_2$,  we can derive an upper bound for the trace distance $$\norm{\boxtimes^N\rho-\mathcal{M}(\rho)}_1\leq \sqrt{d^n}(1-MG(\rho))^N\norm{\rho-\mathcal{M}(\rho)}_2,$$
 where the factor $\sqrt{d^n}$ is independent of the state $\rho$. Compared to this bound, our bound on the trace distance in Corollary~\ref{cor:trace_CLT} will be better, especially
for the state $\rho$ with small $R(\rho)$.

Moreover, quantum entropy is continuous with respect to the trace norm. To the best of our knowledge, the continuity of quantum entropy has the following form~\cite{audenaert2007sharp,petz2007quantum}:
for any $n$-qudit states $\rho$ and $\sigma$,
$|S(\rho)-S(\sigma)|\leq 2\norm{\rho-\sigma}_1\log (d^n-1)+h(\frac{1}{2}\norm{\rho-\sigma}_1)$ for $\frac{1}{2}\norm{\rho-\sigma}_1\leq 1-\frac{1}{d^n}$, where $h(x):=-x\log x-(1-x)\log(1-x)$. Hence, by the continuity of quantum entropy,  we will get some upper bound on the entropic q-CLT with at least an additional factor $n\log d$, which is worse than the results we obtained in  Theorem~\ref{thm:qclt2}.

\end{Rem}

\begin{figure}
    \centering
    \includegraphics[width=8cm]{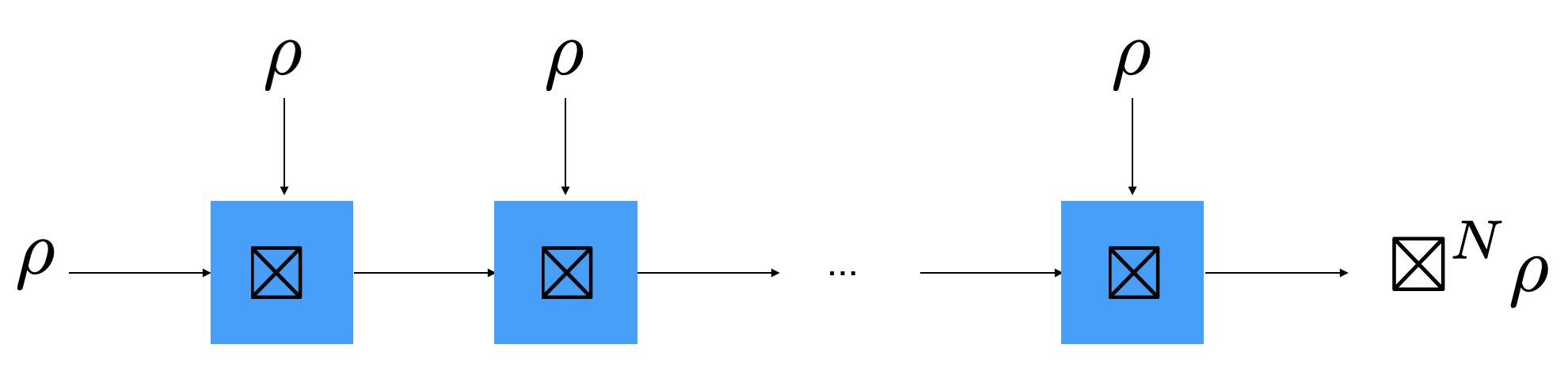}
    \caption{A diagram of the repeated quantum convolution in the q-CLT. }
    \label{fig:enter-label}
\end{figure}

Now, let us generalize the entropic q-CLT from von Neumann entropy to R\'enyi entropy.
\begin{thm}[\bf R\'enyi entropic q-CLT via magic gap]\label{231117thm1}
    Given an $n$-qudit state $\rho$ and any parameter $\alpha\in [1,\infty]$, the $\alpha$-quantum R\'enyi relative entropy of $\boxtimes^N\rho$ with respect to the 
mean state $\mathcal{M}(\boxtimes^N\rho)$ satisfies the  bound,
    \begin{align}
     \nonumber         &D_{\alpha}(\boxtimes^N\rho||\mathcal{M}(\boxtimes^N\rho))\\
    \nonumber          =&S_{\alpha}(\mathcal{M}(\rho))-S_{\alpha}(\boxtimes^N\rho)\\
        \leq& \log\left(1+
(1-MG(\rho))^{N} R(\rho)\sqrt{\Tr{\rho^2}-\frac{1}{R(\rho)}}
        \right)\\
    \nonumber          \leq& (1-MG(\rho))^{N}R(\rho)\sqrt{\Tr{\rho^2}-\frac{1}{R(\rho)}}\\
    \nonumber         \to & 0\;, ~\text{ as}~~ N\to \infty.
    \end{align}
\end{thm}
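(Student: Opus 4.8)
The plan is to mimic the structure of the proof of Theorem~\ref{thm:qclt2}, but to work with the Rényi quantities directly rather than passing through the von Neumann entropy. First I would invoke Lemma~\ref{lem:ent_equ}, which gives $D_{\alpha}(\boxtimes^N\rho\,\|\,\mathcal{M}(\boxtimes^N\rho)) = S_{\alpha}(\mathcal{M}(\boxtimes^N\rho)) - S_{\alpha}(\boxtimes^N\rho)$ for all $\alpha\ge1$, and I would note (exactly as in Theorem~\ref{thm:qclt2}) that $\mathcal{M}(\boxtimes^N\rho)$ is Clifford-equivalent to $\mathcal{M}(\rho)$, so $S_{\alpha}(\mathcal{M}(\boxtimes^N\rho)) = S_{\alpha}(\mathcal{M}(\rho)) = \log R(\rho)$, the latter because $\mathcal{M}(\rho)$ is a maximally-mixed state on its support (an MSPS) of rank $R(\rho)$. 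This reduces the first displayed equality to bookkeeping and leaves the inequality $\log R(\rho) - S_{\alpha}(\boxtimes^N\rho) \le \log\bigl(1 + (1-MG(\rho))^{N-1}R(\rho)\sqrt{\Tr{\rho^2}-1/R(\rho)}\bigr)$ to prove.

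For the main inequality I would first reduce the $\alpha$-dependence away. Since $S_{\alpha}(\sigma)$ is nonincreasing in $\alpha$, we have $S_{\alpha}(\boxtimes^N\rho)\ge S_{\infty}(\boxtimes^N\rho) = -\log\lambda_{\max}(\boxtimes^N\rho)$, so it suffices to prove the bound for $\alpha=\infty$, i.e. to show $\lambda_{\max}(\boxtimes^N\rho)\le \tfrac1{R(\rho)} + (1-MG(\rho))^{N-1}\sqrt{\Tr{\rho^2}-1/R(\rho)}$. Write $\sigma_N := \boxtimes^N\rho$ and $M:=\mathcal{M}(\rho)=\frac1{R}\Pi$ where $\Pi$ is the projection onto the stabilizer support and $R=R(\rho)=d^n/|G_\rho|$. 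The key estimate is operator-norm control of $\sigma_N - M$: expanding in the Weyl basis and using that $\Xi_{\sigma_N}$ agrees with $\Xi_M$ on $G_\rho$ (value $1$) while off $G_\rho$ one has $\Xi_{\boxtimes^N\rho}(\vec x) = \prod$ of contracted characteristic functions so that $|\Xi_{\sigma_N}(\vec x)|\le \lambda^{(N-1)/2}|\Xi_\rho(\vec x)|$ with $\lambda=(1-MG(\rho))^2$ — this is precisely the mechanism already used in Theorem~\ref{thm:qclt2} and in the $L_2$ q-CLT of \cite{BGJ23a,BGJ23b}. Hence
\begin{align*}
\|\sigma_N - M\|_\infty \le \|\sigma_N-M\|_2 = \Bigl(\tfrac1{d^n}\sum_{\vec x\notin G_\rho}|\Xi_{\sigma_N}(\vec x)|^2\Bigr)^{1/2}
\le \lambda^{(N-1)/2}\Bigl(\tfrac1{d^n}\sum_{\vec x\notin G_\rho}|\Xi_\rho(\vec x)|^2\Bigr)^{1/2}
= \lambda^{(N-1)/2}\sqrt{\Tr{\rho^2}-\tfrac1R}\,,
\end{align*}
using $\|\cdot\|_\infty\le\|\cdot\|_2$ (the operator $2$-norm here being the normalized Hilbert–Schmidt norm, whose square is $\tfrac1{d^n}\sum_{\vec x}|\Xi(\vec x)|^2$) and $\frac1{d^n}\sum_{\vec x}|\Xi_\rho(\vec x)|^2 = \Tr{\rho^2}$, $\frac{|G_\rho|}{d^n}=\frac1R$. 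Then $\lambda_{\max}(\sigma_N)\le \lambda_{\max}(M) + \|\sigma_N-M\|_\infty = \tfrac1R + \lambda^{(N-1)/2}\sqrt{\Tr{\rho^2}-1/R}$, and since $\lambda^{(N-1)/2}=(1-MG(\rho))^{N-1}$ this gives the desired bound on $\lambda_{\max}$; taking $-\log$ and rearranging yields $S_\infty(\sigma_N)\ge \log R - \log\bigl(1+(1-MG(\rho))^{N-1}R\sqrt{\Tr{\rho^2}-1/R}\bigr)$, hence the claim for all $\alpha\ge1$. The final line is $\log(1+t)\le t$, and the convergence to $0$ follows because $1-MG(\rho)<1$ whenever the set in Definition~\ref{def:ma_gap} is nonempty (and the bound is trivially $0$ otherwise, since then $\boxtimes^N\rho=\mathcal{M}(\rho)$).

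The main obstacle I anticipate is making the step $\|\sigma_N-M\|_\infty\le\|\sigma_N-M\|_2$ with the correctly normalized $2$-norm airtight, together with the Clifford-equivariance bookkeeping $\mathcal{M}(\boxtimes^N\rho)\sim\mathcal{M}(\rho)$ needed so that the off-$G_\rho$ Weyl coefficients of $\sigma_N$ are exactly the ones being damped by $\lambda^{(N-1)/2}$ — i.e. that the support of $\mathcal{M}(\boxtimes^N\rho)$, when pulled back, really is $G_\rho$, so that no "new" stabilizer directions appear that are not already suppressed. This is implicit in the earlier q-CLT papers, so I would cite Theorem~57 of \cite{BGJ23a} and the characteristic-function multiplicativity of $\boxtimes$ from \cite{BGJ23b} for these two facts rather than reprove them. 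Everything else is the same chain of elementary inequalities ($S_\alpha$ monotone in $\alpha$, $\|\cdot\|_\infty\le\|\cdot\|_2$, Weyl–Parseval, $\log(1+t)\le t$) that appears in the proof of Theorem~\ref{thm:qclt2}.
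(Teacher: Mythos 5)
Your proposal is correct and follows essentially the same route as the paper: reduce to $\alpha=\infty$ via monotonicity of the R\'enyi quantities, control $\lambda_{\max}(\boxtimes^N\rho)$ by $\tfrac{1}{R(\rho)}$ plus a Hilbert--Schmidt bound on $\boxtimes^N\rho-\mathcal{M}(\rho)$, damp the off-$G_\rho$ Weyl coefficients by $(1-MG(\rho))^{N-1}$, and finish with $\log(1+t)\le t$. The only cosmetic difference is that you phrase the key estimate as $\|\sigma_N-M\|_\infty\le\|\sigma_N-M\|_2$, whereas the paper writes the same bound pointwise as $|\bra{\psi}(\rho-\mathcal{M}(\rho))\ket{\psi}|\le\sqrt{\Tr{(\rho-\mathcal{M}(\rho))^2}}$ via Cauchy--Schwarz.
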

\begin{proof}
Based on the monotonicity of R\'enyi relative entropy $D_{\alpha}\leq D_{\infty}$ for any $\alpha\geq 0$, we only need to prove the statement for the max-relative entropy
$D_{\infty}$. For simplicity, let us take
$ \lambda=(1-MG(\rho))^2= \max_{\vec x\in  \text{Supp}(\Xi_{\rho}): |\Xi_{\rho}(\vec x)|\neq 1}|\Xi_{\rho}(\vec x)|^2$.
    First, we have 
    \begin{align*}
        &|\bra{\psi}\rho\ket{\psi}|\\
        \leq&  |\bra{\psi}\rho-\mathcal{M}(\rho)\ket{\psi}|
        +|\bra{\psi}\mathcal{M}(\rho)\ket{\psi}|\\
        \leq& \sqrt{\Tr{(\rho-\mathcal{M}(\rho))^2}}
        +\frac{1}{R(\rho)}\\
        =&\sqrt{\frac{1}{d^n}\sum_{\vec x \notin G_{\rho}}|\Xi_{\rho}(\vec x)|^2}
        +\frac{1}{R(\rho)},
    \end{align*}
    where the second line comes from the triangle inequality, the third line comes from
    the Cauchy-Schwarz inequality and the last line comes from the fact that 
    $\rho-\mathcal{M}(\rho)=\frac{1}{d^n}\sum_{\vec x\notin G_{\rho}}\Xi_{\rho}(\vec x)w(\vec x)$.
    Hence, we have 
    \begin{align}
        \nonumber   &\lambda_{\max}(\boxtimes^N\rho)\\
       \nonumber  \leq& \sqrt{\frac{1}{d^n}\sum_{\vec x \notin G_{\rho}}|\Xi_{\boxtimes^N\rho}(\vec x)|^2}
        +\frac{1}{R(\rho)}\\
   \label{ineq:lambda_max}     \leq& \sqrt{\frac{\lambda^{N}}{d^n}\sum_{\vec x \notin G_{\rho}}|\Xi_{\rho}(\vec x)|^2}
        +\frac{1}{R(\rho)}\\
     =&\lambda^{N/2}\sqrt{\Tr{\rho^2}-\frac{1}{R(\rho)}}
       \nonumber  +\frac{1}{R(\rho)},
    \end{align}
 where   the third line comes from the definition of $\lambda$, and the last line comes from the fact that 
\begin{eqnarray*}
\Tr{\rho^2}
   =\frac{1}{d^n}\sum_{\vec x}|\Xi_{\rho}(\vec x)|^2
   =\frac{|G_{\rho}|}{d^n}+\frac{1}{d^n}\sum_{\vec x\notin G_{\rho}}|\Xi_{\rho}(\vec x)|^2.
\end{eqnarray*}   

    Hence, 
    \begin{align*}
        &D_{\infty}(\boxtimes^N\rho||\mathcal{M}(\rho))\\
        =&S_{\infty}(\mathcal{M}(\rho))-S_{\infty}(\boxtimes^N\rho)\\
        =&\log R(\rho)-\log\frac{1}{\lambda_{\max}(\boxtimes^N\rho)}\\
        =&\log \left[R(\rho)\lambda_{\max}(\boxtimes^N\rho)\right]\\
        \leq&\log \left[1+
        \lambda^{N/2}R(\rho)\sqrt{\Tr{\rho^2}-\frac{1}{R(\rho)}}
        \right]\\
         \leq & \lambda^{N/2}R(\rho)\sqrt{\Tr{\rho^2}-\frac{1}{R(\rho)}},\\
        \to &0,~~\text{as}~~ N\to \infty.
    \end{align*}
    where the second line comes from the Lemma~\ref{lem:ent_equ}, 
and the fifth line comes from the inequality~\eqref{ineq:lambda_max}.
\end{proof}

\subsection{Entropic q-CLT for qubit systems}
Now, let us consider the entropic q-CLT for the qubit systems.
In a qubit system, i.e., the local dimension $d=2$, there 
is no nontrivial choice of $s,t\in \mathbb{Z}_2$ such that $s^2+t^2\equiv 1\mod 2$, as required in 
the Definition~\ref{def:conv}. Hence, it is impossible to
consider the discrete beam splitter with two input states using this definition.
 Consequently, we adopt the definition of quantum convolution in Definition 7 of~\cite{BGJ23c}. 
The quantum convolution on an $n$-qubit system under quantum Fourier transform 
(i.e., the characteristic function) also becomes multiplication, which is a common feature shared by 
the one in qudit systems in Definition~\ref{def:conv}. Hence, we will use Definition 7 of~\cite{BGJ23c}
to investigate the entropic q-CLT for the qubit systems.

\begin{Def}[\bf Key Unitary]\label{def:key_U}
The key unitary $V$ for  $K$ quantum systems, with each system containing $n$ qubits, is 
\begin{align}
\nonumber V:=U^{\ot n}
=&U_{1,n+1,...,(K-1)n+1}\ot U_{2,n+2,...,(K-1)n+2}\ot\\
\label{eq:con_cir}& ...\ot U_{n, 2n,..., Kn}.
\end{align}
Here  $U$ is a $K$-qubit unitary constructed using CNOT gates:
\begin{eqnarray}
U:=\left(\prod^K_{j=2}CNOT_{j\to 1}\right)\left(\prod^K_{i=2}CNOT_{1\to i}\right)\;,
\end{eqnarray}
and 
$
CNOT_{2\to 1}\ket{x}\ket{y}=\ket{x+y}\ket{y}
$ for any $x,y\in\mathbb{Z}_2$. 
\end{Def}

\begin{Def}[\bf Convolution of multiple states]\label{Def:conv_qubit}
Given $K$ states $\rho_1,\rho_2,..., \rho_K$, each with $n$-qubits, the multiple convolution $\boxtimes_K$ 
maps $\rho_1,\rho_2,..., \rho_K$  to an $n$-qubit state as follows
\begin{eqnarray}\label{eq:qub_con}
\boxtimes_{K}(\rho_1,\rho_2,...,\rho_K)=\boxtimes_K(\ot^K_{i=1}\rho_i)=\Ptr{1^c}{V\ot^K_{i=1}\rho_i V^\dag}\;.
\end{eqnarray}
Here $V$ is the key unitary in Definition~\ref{def:key_U}, and
$\Ptr{1^c}{\cdot}$ denotes the partial trace taken on the subsystem $2, 3..., K$, i.e., 
 $\Ptr{1^c}{\cdot}=\Ptr{2,3,...,K}{\cdot}$.
\end{Def}

\begin{thm}[\bf Entropic q-CLT for qubits]
Given an $n$-qubit state $\rho$, and $N=2K+1$ for any integer $K\geq 1$, 
\begin{align}
   &\nonumber D(\boxtimes_N\rho||\mathcal{M}(\boxtimes_N\rho))\\
   \nonumber =&S(\mathcal{M}(\rho))-S(\boxtimes_N\rho)\\
    \leq& \log\left[1+(1-MG(\rho))^{2N-2}\left(\Tr{\rho^2}R(\rho)-1\right)
    \right]\\
    \nonumber  \leq& (1-MG(\rho))^{2N-2}\left(\Tr{\rho^2}R(\rho)-1\right)\\
    \nonumber\to& 0,~~\text{as}~~N\to \infty.
\end{align}
\end{thm}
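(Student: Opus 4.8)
The plan is to replay the argument of Theorem~\ref{thm:qclt2} essentially verbatim, with the DV convolution $\boxtimes$ replaced by the CNOT-based qubit convolution $\boxtimes_N$ of Definition~\ref{Def:conv_qubit}; the only qubit-specific ingredient needed is the contraction of the characteristic function under $\boxtimes_N$ in the case $N=2K+1$ (i.e.\ $N$ odd). First I would reduce to entropies: Lemma~\ref{lem:ent_equ} gives $D(\boxtimes_N\rho||\mathcal{M}(\boxtimes_N\rho))=S(\mathcal{M}(\boxtimes_N\rho))-S(\boxtimes_N\rho)$, and I would invoke the structural facts (from~\cite{BGJ23c}) that $\mathcal{M}(\boxtimes_N\rho)$ is Clifford-equivalent to $\mathcal{M}(\rho)$ — hence $S(\mathcal{M}(\boxtimes_N\rho))=S(\mathcal{M}(\rho))=\log R(\rho)$ with $R(\rho)=2^n/|G_\rho|$ — and that the stabilizer group $G_\rho$ of $\mathcal{M}(\rho)$ lies inside that of $\mathcal{M}(\boxtimes_N\rho)$, so $|\Xi_{\boxtimes_N\rho}(\vec x)|=1$ for all $\vec x\in G_\rho$.

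Next I would bound the von Neumann entropy below by the R\'enyi-$2$ entropy, $S(\boxtimes_N\rho)\geq S_2(\boxtimes_N\rho)=-\log\Tr{(\boxtimes_N\rho)^2}$, and split the Parseval identity
\[
\Tr{(\boxtimes_N\rho)^2}=\frac{1}{2^n}\sum_{\vec x}|\Xi_{\boxtimes_N\rho}(\vec x)|^2=\frac{|G_\rho|}{2^n}+\frac{1}{2^n}\sum_{\vec x\notin G_\rho}|\Xi_{\boxtimes_N\rho}(\vec x)|^2 .
\]
The heart of the proof is the estimate $\sum_{\vec x\notin G_\rho}|\Xi_{\boxtimes_N\rho}(\vec x)|^2\leq(1-MG(\rho))^{2N-2}\sum_{\vec x\notin G_\rho}|\Xi_\rho(\vec x)|^2$. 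To prove it I would unfold $\Xi_{\boxtimes_N\rho}$ through the key unitary $V$ of Definition~\ref{def:key_U}: since $V$ is a Clifford circuit it sends each Weyl operator to a Weyl operator, so $\Xi_{\boxtimes_N\rho}(\vec x)$ is a product of $N$ values $\Xi_\rho(\varphi_j(\vec x))$, where the $\varphi_j$ are the phase-space maps induced by the CNOT layers. One then checks that for $\vec x\notin G_\rho$ at least $N-1$ of the points $\varphi_j(\vec x)$ lie outside $G_\rho$, so that $N-1$ of the factors have modulus at most $1-MG(\rho)$ by the definition of the magic gap, while the remaining factor has modulus at most $1$; bounding those $N-1$ factors and using that the remaining map $\varphi_{j_0}$ permutes the set $\{\vec x:\vec x\notin G_\rho\}$ yields the claim. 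This is exactly the qubit analogue of the step $\sum_{\vec x\notin G}|\Xi_{\boxtimes^N\rho}(\vec x)|^2\leq\lambda^{N-1}\sum_{\vec x\notin G}|\Xi_\rho(\vec x)|^2$ in the proof of Theorem~\ref{thm:qclt2}.

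Assembling these, $\Tr{(\boxtimes_N\rho)^2}\leq\frac{|G_\rho|}{2^n}+(1-MG(\rho))^{2N-2}\big(\Tr{\rho^2}-\frac{|G_\rho|}{2^n}\big)$, whence
\[
S(\mathcal{M}(\rho))-S(\boxtimes_N\rho)\leq\log R(\rho)+\log\Tr{(\boxtimes_N\rho)^2}\leq\log\big[1+(1-MG(\rho))^{2N-2}(\Tr{\rho^2}R(\rho)-1)\big],
\]
using $R(\rho)\tfrac{|G_\rho|}{2^n}=1$; the last inequality in the statement follows from $\log(1+t)\leq t$, and when $MG(\rho)>0$ one has $1-MG(\rho)<1$, so the bound tends to $0$ as $N\to\infty$ (the statement being trivially $0=0$ when $MG(\rho)=0$).

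I expect the only real obstacle to be the contraction estimate of the second paragraph: one must pin down the linear action on phase space of the $K$-qubit CNOT circuit $U$ (hence of $V=U^{\otimes n}$), identify the maps $\varphi_j$, and verify that the parity restriction $N=2K+1$ is exactly what forces at least $N-1$ of the $N$ arguments $\varphi_j(\vec x)$ out of $G_\rho$ whenever $\vec x\notin G_\rho$. Unlike the DV convolution — whose effect on the characteristic function is a product of rescalings of a single argument, which obviously preserve $G_\rho$ — the CNOT convolution has a more intricate block structure, so this bookkeeping is the delicate point; I would rely on the analysis of $\boxtimes_N$ already developed in~\cite{BGJ23c}.
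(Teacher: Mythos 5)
Your proposal is correct and follows the same route as the paper, whose entire proof here is the remark that ``the proof is the same as the qudit case,'' i.e.\ a verbatim replay of Theorem~\ref{thm:qclt2} with the convolution--multiplication duality for $\boxtimes_N$ taken from~\cite{BGJ23c}. The one delicate point you flag is in fact simpler than you anticipate: for odd $N$ the key unitary acts on Weyl operators by $V^\dag\bigl(w(\vec x)\ot I^{\ot(N-1)}\bigr)V\propto w(\vec x)^{\ot N}$, so $\Xi_{\boxtimes_N\rho}(\vec x)=\Xi_\rho(\vec x)^N$ and all $N$ arguments coincide with $\vec x$ itself, which immediately gives $|\Xi_{\boxtimes_N\rho}(\vec x)|^2\le(1-MG(\rho))^{2N-2}|\Xi_\rho(\vec x)|^2$ for $\vec x\notin G_\rho$.
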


\begin{proof}
The proof is the same as the qudit case.
\end{proof}

\begin{thm}[\bf R\'enyi entropic q-CLT via magic gap]\label{231117thm1}
   Given an $n$-qubit state $\rho$, $N=2K+1$ for any integer $K\geq 1$, and any parameter $\alpha\in [1,\infty]$, we have
    \begin{align}
   \nonumber       &D_{\alpha}(\boxtimes_N\rho||\mathcal{M}(\boxtimes_N\rho))\\
   \nonumber         =& 
S_{\alpha}(\mathcal{M}(\rho))-S_{\alpha}(\boxtimes_N\rho)\\
        \leq& \log\left(1+
(1-MG(\rho))^{N-1} R(\rho)\sqrt{\Tr{\rho^2}-\frac{1}{R(\rho)}}
        \right)\\
  \nonumber        \leq& (1-MG(\rho))^{N-1}R(\rho)\sqrt{\Tr{\rho^2}-\frac{1}{R(\rho)}},\\
  \nonumber        \to& 0, ~~\text{as}~~ N\to \infty.
    \end{align}
\end{thm}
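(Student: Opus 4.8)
The plan is to follow the proof of the qudit R\'enyi entropic q-CLT line by line, changing only the contraction estimate for the characteristic function to its analogue for the CNOT-based multiple convolution $\boxtimes_N$ of Definition~\ref{Def:conv_qubit}. First, the claimed equality holds for every $\alpha\in[1,\infty]$: by Lemma~\ref{lem:ent_equ}, $D_{\alpha}(\boxtimes_N\rho\,\|\,\mathcal{M}(\boxtimes_N\rho)) = S_{\alpha}(\mathcal{M}(\boxtimes_N\rho)) - S_{\alpha}(\boxtimes_N\rho)$, and $\mathcal{M}(\boxtimes_N\rho)$ is a stabilizer state equal to $\mathcal{M}(\rho)$ up to a Clifford unitary (the same fact used in the proof of Theorem~\ref{thm:qclt1}), so it has rank $R(\rho)$, largest eigenvalue $1/R(\rho)$, and $S_{\alpha}(\mathcal{M}(\boxtimes_N\rho)) = S_{\alpha}(\mathcal{M}(\rho)) = \log R(\rho)$. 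Since $D_{\alpha}\leq D_{\infty}$ for all $\alpha\geq 0$, it suffices to bound the max-relative entropy, and there $D_{\infty}(\boxtimes_N\rho\,\|\,\mathcal{M}(\boxtimes_N\rho)) = \log R(\rho) + \log\lambda_{\max}(\boxtimes_N\rho) = \log\!\big[R(\rho)\,\lambda_{\max}(\boxtimes_N\rho)\big]$, where $\lambda_{\max}$ denotes the largest eigenvalue.

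Next I would bound $\lambda_{\max}(\boxtimes_N\rho)$ exactly as in the qudit argument: for any unit vector $\ket{\psi}$,
\[
\abs{\bra{\psi}\boxtimes_N\rho\ket{\psi}}
\;\leq\; \abs{\bra{\psi}\big(\boxtimes_N\rho-\mathcal{M}(\boxtimes_N\rho)\big)\ket{\psi}}
+\abs{\bra{\psi}\mathcal{M}(\boxtimes_N\rho)\ket{\psi}}
\;\leq\; \sqrt{\Tr{\big(\boxtimes_N\rho-\mathcal{M}(\boxtimes_N\rho)\big)^2}}+\frac{1}{R(\rho)},
\]
using the triangle inequality, Cauchy--Schwarz, and the largest eigenvalue of $\mathcal{M}(\boxtimes_N\rho)$. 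Expanding $\boxtimes_N\rho-\mathcal{M}(\boxtimes_N\rho)$ in the Weyl basis gives $\Tr{\big(\boxtimes_N\rho-\mathcal{M}(\boxtimes_N\rho)\big)^2} = \frac{1}{2^n}\sum_{\vec x\notin G_{\rho}}\abs{\Xi_{\boxtimes_N\rho}(\vec x)}^2$. The key input, furnished by the analysis of the qubit convolution in~\cite{BGJ23c}, is the magic-gap contraction $\frac{1}{2^n}\sum_{\vec x\notin G_{\rho}}\abs{\Xi_{\boxtimes_N\rho}(\vec x)}^2 \leq (1-MG(\rho))^{2N-2}\,\frac{1}{2^n}\sum_{\vec x\notin G_{\rho}}\abs{\Xi_{\rho}(\vec x)}^2$, combined with $\frac{1}{2^n}\sum_{\vec x\notin G_{\rho}}\abs{\Xi_{\rho}(\vec x)}^2 = \Tr{\rho^2}-\frac{1}{R(\rho)}$. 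These yield $\lambda_{\max}(\boxtimes_N\rho)\leq (1-MG(\rho))^{N-1}\sqrt{\Tr{\rho^2}-\tfrac{1}{R(\rho)}}+\tfrac{1}{R(\rho)}$; multiplying by $R(\rho)$, taking the logarithm, and using $\log(1+x)\leq x$ give the two displayed bounds, whose right-hand side tends to $0$ as $N\to\infty$ (geometrically when $MG(\rho)>0$, and identically $0$ when $MG(\rho)=0$, in which case $\rho=\mathcal{M}(\rho)$ is already a stabilizer state and $\Tr{\rho^2}=1/R(\rho)$).

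The one nontrivial ingredient, and the expected obstacle, is the contraction estimate for $\boxtimes_N$. For $d=2$ there is no binary convolution $\boxtimes_{s,t}$ with $s^2+t^2\equiv 1\bmod d$ and $s,t\neq 0$, so in place of iterating a single two-state map one must analyze directly how the characteristic function transforms under the key unitary $V$ of Definition~\ref{def:key_U}, built from layers of CNOT gates acting on $K$ blocks of $n$ qubits, and verify that the $(2K+1)$-to-one convolution suppresses $\abs{\Xi}^2$ off the stabilizer group $G_{\rho}$ by a factor $(1-MG(\rho))^{2N-2}$, the parity choice $N=2K+1$ being exactly what makes this product structure close up. Granting this estimate (established for the qubit setting in~\cite{BGJ23c}), the remainder is the verbatim transcription of the qudit argument announced by the phrase ``the proof is the same as the qudit case.''
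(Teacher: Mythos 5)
Your proposal is correct and follows exactly the route the paper intends: the paper's proof of this qubit theorem is literally ``the proof is the same as the qudit case,'' and you have faithfully reconstructed that qudit argument (reduction to $D_{\infty}$ via monotonicity, the triangle-inequality/Cauchy--Schwarz bound on $\lambda_{\max}(\boxtimes_N\rho)$, the characteristic-function contraction, and $\log(1+x)\leq x$), correctly identifying the magic-gap contraction estimate for the CNOT-based convolution from \cite{BGJ23c} as the single ingredient that must be imported to replace the $\boxtimes_{s,t}$ duality unavailable at $d=2$.
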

\begin{proof}
    The proof is the same as the qudit case.
\end{proof}

\section{Quantum Ruzsa Divergence}\label{sec:RZ}
In this section, we introduce a new quantum divergence measure derived from quantum convolution, termed the quantum Ruzsa divergence. This measure is motivated by the need in quantum information theory to explore the hidden stabilizer properties of any quantum state. Specifically, for any quantum state, it is essential to determine whether it is a stabilizer state or to quantify how far it deviates from being a stabilizer state—that is, to quantify the amount of magic in the state. Since magic is a crucial quantum resource to achieve quantum computational advantage, its quantification is a fundamental question in the field of quantum computation. We propose that the Ruzsa divergence provides a new family of measures for quantifying magic,  based on the hidden stabilizer structure and its behavior under quantum convolution. Additionally, we expand on this concept by extending the inverse sumset theorem to the qudit case, enabling us to further explore the stabilizer structure of quantum states.

\subsection{Definition and properties}\label{sec:PRZ}

\begin{Def}[\bf Quantum Ruzsa Divergence]\label{Def:QRD}
    Given a quantum convolution $\boxtimes$, the quantum Ruzsa divergence of a state $\rho$ with 
    respect to the state $\sigma$ is
    \begin{eqnarray}
        D_{Rz}(\rho||\sigma):=S(\rho\boxtimes\sigma)-S(\rho).
    \end{eqnarray}
    The $\alpha$-order quantum Ruzsa divergence of $\rho$ with 
    respect to $\sigma$ is
      \begin{eqnarray}
        D_{\alpha,Rz}(\rho||\sigma):=S_{\alpha}(\rho\boxtimes\sigma)-S_{\alpha}(\rho).
    \end{eqnarray}
\end{Def}

\begin{Def}[\bf Symmetrized Quantum Ruzsa Divergence]
    Given two quantum states $\rho$ and $\sigma$, the symmetrized  quantum Ruzsa divergence between 
    $\rho$ and $\sigma$ is 
    \begin{eqnarray}\label{eq:sym_Rz_D}
       d_{Rz}(\rho,\sigma):=\frac{1}{2}\left(S(\rho\boxtimes\sigma)+S(\sigma\boxtimes\rho)-S(\rho)-S(\sigma)\right).
    \end{eqnarray}
    The $\alpha$-order quantum Ruzsa divergence between 
    $\rho$ and $\sigma$ is 
      \begin{align}
          d_{\alpha,Rz}(\rho,\sigma):=\frac{1}{2}\left(S_{\alpha}(\rho\boxtimes\sigma)+S_{\alpha}(\sigma\boxtimes\rho)-S_{\alpha}(\rho)-S_{\alpha}(\sigma)\right).
      \end{align}

\end{Def}

\begin{prop}\label{thm:QRD}
    The quantum Ruzsa divergence satisfies the following properties: 

    (1) {\bf Positivity:} $D_{Rz}(\rho||\sigma)\geq 0$. Also $D_{Rz}(\rho||\sigma)=0$ iff the state $\rho$  is in the abelian C*-algebra generated by stabilizer group $S$ of $\sigma$, i.e.,
$
\rho 
$ is a convex sum of MSPSs associated with $S$. From this we infer that $D_{Rz}(\rho||\rho)=0$ iff $\rho$ is an MSPS.

    (2) {\bf Additivity under tensor product:} $D_{Rz}(\rho_1\ot\rho_2||\sigma_1\ot \sigma_2)=D_{Rz}(\rho_1||\sigma_1)+D_{Rz}(\rho_2||\sigma_2)$.

    (3) {\bf Invariance under Clifford unitary:} $D_{Rz}(U\rho U^\dag|| U\sigma U^\dag)=D_{Rz}(\rho||\sigma)$ for any Clifford unitary $U$.

    (4) {\bf Monotonicity under partial trace:} $D_{Rz}(\Ptr{i}{\rho}||\Ptr{i}{\sigma})\leq D_{Rz}(\rho||\sigma)$, where $\Ptr{i}{\cdot}$ denotes the 
    partial trace on the $i$-th qudit for any $i\in [n]$.

(5)   {\bf Convexity in the first term and concavity in the second:}
    $D_{Rz}(\sum_ip_i\rho_i||\sigma)\leq \sum_ip_iD_{Rz}(\rho_i||\sigma)$, and 
    $D_{Rz}(\rho||\sum_iq_i\sigma_i)\geq\sum_iq_iD_{Rz}(\rho||\sigma_i)$, where 
    $\set{p_i}_i$ and $\set{q_i}_i$ are  classical probability distributions.
    \end{prop}
\begin{proof}
(1) $D_{Rz}(\rho||\sigma)\geq 0$ comes from the entropy inequality under convolution, i.e., 
$S(\rho\boxtimes\sigma)\geq \max\set{S(\rho), S(\sigma)}$ \cite{BGJ23a,BGJ23b}, and the condition for $D_{Rz}(\rho||\sigma)=0$
is the condition for the equality $S(\rho\boxtimes\sigma)=S(\rho)$. (See Theorem 58 in~\cite{BGJ23b}.) 

(2) This follows  from the fact that $(\rho_1\ot \rho_2)\boxtimes (\sigma_1\ot \sigma_2)=(\rho_1\boxtimes\sigma_1)\ot (\rho_2\boxtimes\sigma_2)$.

(3) This is the commutativity of  Clifford unitaries and quantum convolution; see Lemma 85 in~\cite{BGJ23b}. In other words, there always exists some Clifford unitary $U'$ such that 
$(U\rho U^\dag)\boxtimes (U\sigma U^\dag)=U'(\rho\boxtimes\sigma)U'^\dag$
for Clifford unitary $U$.

(4) 
First, we have 
\begin{align*}
    &S\left((\mathbb{E}_{\vec x}w(\vec x)\rho w(\vec x)^\dag)\boxtimes\sigma\right)-S\left(\mathbb{E}_{\vec x}w(\vec x)\rho w(\vec x)^\dag\right)\\
    \leq& \mathbb{E}_{\vec x} \left[S\left((w(\vec x)\rho w(\vec x)^\dag)\boxtimes\sigma\right)-S\left(w(\vec x)\rho w(\vec x)^\dag\right)\right]\\
    =& \mathbb{E}_{\vec x} \left[S\left(w(s\vec x)(\rho\boxtimes\sigma)w(s\vec x)^\dag\right)-S\left(w(\vec x)\rho w(\vec x)^\dag\right)\right] \\
    =&S(\rho\boxtimes\sigma)-S(\rho).
\end{align*}
Here the second line comes from the joint convexity of the quantum relative entropy 
$D(U_{s,t}\rho\ot\sigma U^\dag_{s,t}||\rho\boxtimes\sigma\ot \frac{I}{d^n})$, i.e., 
\begin{align*}
    D\left(U_{s,t}\mathbb{E}_{\vec x}w(\vec x)\rho w(\vec x)^\dag\ot\sigma U^\dag_{s,t}||\mathbb{E}_{\vec x}w(\vec x)\rho w(\vec x)^\dag\boxtimes\sigma\ot \frac{I}{d^n}\right)\\ 
    \leq \mathbb{E}_{\vec x}D\left(U_{s,t}w(\vec x)\rho w(\vec x)^\dag\ot\sigma U^\dag_{s,t}||w(\vec x)\rho w(\vec x)^\dag\boxtimes\sigma\ot \frac{I}{d^n}\right),
\end{align*}
where $D(U_{s,t}\rho\ot\sigma U^\dag_{s,t}||\rho\boxtimes\sigma\ot \frac{I}{d^n})=S(\rho\boxtimes\sigma)+n\log d-S(\rho)-S(\sigma)$.
And the 
third line is a consequence of  Proposition 41 in~\cite{BGJ23b}.  In fact,
\begin{align*}
&\mathcal{E}\left(w(\vec x)\ot w(\vec y)\rho_{AB}w(\vec x)^\dag\ot w(\vec y)^\dag\right)\\
=&w(s\vec x+t\vec y)\mathcal{E}(\rho_{AB})w(s\vec x+t\vec y)^\dag,
\end{align*}
where $\mathcal{E}(\rho_{AB})=\Ptr{B}{U_{s,t}\rho_{AB}U^\dag_{s,t}}$. 
Consequently,  we only need to prove that 
\begin{align}
\nonumber   &D_{Rz}\left(\Ptr{i}{\rho}||\Ptr{i}{\sigma}\right)\\
   =& S\left((\mathbb{E}_{\vec x}w(\vec x)\rho w(\vec x)^\dag)\boxtimes\sigma\right)-S\left(\mathbb{E}_{\vec x}w(\vec x)\rho w(\vec x)^\dag\right)\;.
\end{align}
To prove this statement, note that  
\begin{align*}
    &\left(\Ptr{i}{\rho}\ot \frac{I_i}{d}\right)\boxtimes \sigma\\
    =&\left(\mathbb{E}_{\vec{x}_i\in V}w(\vec x_i)\rho w(\vec x_i)^\dag\right)\boxtimes\sigma\\
    =&\mathbb{E}_{\vec{x}_i\in V}w(s\vec x_i)(\rho\boxtimes\sigma)w(s\vec x_i)^\dag\\
    =& \Ptr{i}{\rho\boxtimes\sigma}\ot \frac{I_i}{d},
\end{align*}
where the first and last lines come from the fact that $\mathbb{E}_{\vec x_i\in V}w(\vec x_i)(\cdot)w(\vec x_i)^\dag=\Ptr{i}{\cdot}\ot \frac{I_i}{d}$, and the second line comes from the following property (See Proposition 41 in~\cite{BGJ23b})
\begin{align*}
\mathcal{E}(w(\vec x)\ot w(\vec y)\rho_{AB}w(\vec x)^\dag\ot w(\vec y)^\dag)
=w(s\vec x+t\vec y)\rho_{AB}w(s\vec x+t\vec y)^\dag,
\end{align*}
where $\mathcal{E}(\rho_{AB})=\Ptr{B}{U_{s,t}\rho_{AB}U^\dag_{s,t}}$. 
Repeating the above process for $\sigma$, we obtain the following result
\begin{align}
    \left(\Ptr{i}{\rho}\ot \frac{I_i}{d}\right)\boxtimes \sigma
    =\Ptr{i}{\rho\boxtimes\sigma}\ot \frac{I_i}{d}
=\Ptr{i}{\rho}\boxtimes\Ptr{i}{\sigma}\ot \frac{I_i}{d}.
\end{align}
Hence, we have 
\begin{align*}
S\left((\mathbb{E}_{\vec x}w(\vec x)\rho w(\vec x)^\dag)\boxtimes\sigma\right)
=&S\left(\Ptr{i}{\rho}\boxtimes\Ptr{i}{\sigma}\ot \frac{I_i}{d}\right)\\
=&S\left(\Ptr{i}{\rho}\boxtimes\Ptr{i}{\sigma}\right)+\log d,
\end{align*}
and 
\begin{align*}
S\left(\mathbb{E}_{\vec x}w(\vec x)\rho w(\vec x)^\dag\right)=
S\left(\Ptr{i}{\rho}\ot \frac{I_i}{d}\right)
=S\left(\Ptr{i}{\rho}\right)+\log d.
\end{align*}
Thus, we have 
\begin{align*}
&S\left((\mathbb{E}_{\vec x}w(\vec x)\rho w(\vec x)^\dag)\boxtimes\sigma\right)
-S\left(\mathbb{E}_{\vec x}w(\vec x)\rho w(\vec x)^\dag\right)\\
=&D_{Rz}\left(\Ptr{i}{\rho}||\Ptr{i}{\sigma}\right).
\end{align*}
Therefore, we obtain the result.

(5) The convexity of $D_{Rz}(\rho||\sigma)$ with respect to the state $\rho$ comes directly from 
the joint convexity of the quantum relative entropy $D\left(U_{s,t}\rho\ot\sigma U^\dag_{s,t}||\rho\boxtimes\sigma\ot \frac{I}{d^n}\right)$. That is, 
\begin{align*}
    &D\left(\sum_ip_iU_{s,t}\rho_i\ot\sigma U^\dag_{s,t}||\sum_ip_i\rho_i\boxtimes\sigma\ot \frac{I}{d^n}\right)\\
    \leq& \sum_ip_i D\left(U_{s,t}\rho_i\ot\sigma U^\dag_{s,t}||\rho_i\boxtimes\sigma\ot \frac{I}{d^n}\right).
\end{align*}
And, the concavity of $D_{Rz}(\rho||\sigma)$ with respect to the state $\sigma$ comes directly from the 
concavity of the von Neumann entropy $S(\cdot)$.  That is, 
\begin{eqnarray*}
    S(\sum_iq_i\rho\boxtimes\sigma_i)
    \geq \sum_iq_iS(\rho\boxtimes\sigma_i).
\end{eqnarray*}
\end{proof}

Note that the quantum Ruzsa divergence differs significantly from the quantum relative entropy. For instance, the quantum relative entropy for identical states is always zero, i.e., $D(\rho||\rho)=0$, which does not hold for quantum Ruzsa divergence. 
Moreover, we can also generalize the results in Proposition~\ref{thm:QRD} to the symmetrized version. 
\begin{cor}
    The symmetrized quantum Ruzsa divergence $d_{Rz}$ satisfies the properties (1)--(4) in Proposition \ref{thm:QRD}.
\end{cor}
\begin{proof}
Here we provide a proof for the completeness. 
By the fact that symmetrized quantum Ruzsa divergence 
$d_{Rz}(\rho, \sigma)$ can be written as 
$d_{Rz}(\rho, \sigma)=\frac{1}{2}[D_{Rz}(\rho||\sigma)+D_{Rz}(\sigma||\rho)]$, it is straightforward to see that $d_{Rz}$ also satisfies the properties (1)--(3).
Moreover, since $D_{Rz}(\Ptr{i}{\rho}||\Ptr{i}{\sigma})\leq D_{Rz}(\rho||\sigma)$ for any states $\rho$ and $\sigma$, we have 
\begin{align*}
    &d_{Rz}(\Ptr{i}{\rho},\Ptr{i}{\sigma})\\
    =&\frac{1}{2}
    \left[
    D_{Rz}(\Ptr{i}{\rho}||\Ptr{i}{\sigma})+D_{Rz}(\Ptr{i}{\sigma}||\Ptr{i}{\rho})
    \right]\\
    \leq& \frac{1}{2}
     \left[
    D_{Rz}(\rho||\sigma)+D_{Rz}(\sigma||\rho)
    \right]\\
=& d_{Rz}(\rho, \sigma).
\end{align*}
Hence, the symmetrized quantum Ruzsa divergence $d_{Rz}$
also satisfies the monotonicity under partial trace, i.e., the property (4) in Proposition~\ref{thm:QRD}.
\end{proof}
Note that, since $d_{Rz}(\rho, \sigma)=\frac{1}{2}[D_{Rz}(\rho||\sigma)+D_{Rz}(\sigma||\rho)]$, and 
$D_{Rz}$ is convex in the first term and concave in the second, $d_{Rz}$ in general does not have the convexity or concavity.

Moreover,  the properties (1)-(3) also hold for the $\alpha$-order
 quantum Ruzsa divergence for $1\leq \alpha<+\infty$; this is a consequence of the property of quantum R\'enyi entropy 
 and Lemma 57 in~\cite{BGJ23b}.

In addition, we have the following result to 
characterize the stabilizer states by using the symmetrized quantum Ruzsa divergence.

\begin{prop}\label{prop:100_QPFR}
    Let $\psi_1$ and $\psi_2$ be two pure $n$-qudit states for which  $d_{Rz}(\psi_1,\psi_2)=0$. Then 
    there exists a pure stabilizer state $\phi_{stab}$ such that 
    $$d_{Rz}(\psi_1,\phi_{stab})=d_{Rz}(\psi_2,\phi_{stab})=0\;.$$ Here  $d_{Rz}$ is the symmetric Ruzsa divergence defined in \eqref{eq:sym_Rz_D}.
\end{prop}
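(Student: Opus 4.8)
The plan is to reduce the statement entirely to the positivity and equality characterization of the quantum Ruzsa divergence recorded in Theorem~\ref{thm:QRD}(1), and then to exhibit the required state explicitly: in fact $\phi_{stab}=\psi_1$ will work.

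First I would unpack the hypothesis. Writing $d_{Rz}(\psi_1,\psi_2)=\tfrac12\bigl(D_{Rz}(\psi_1||\psi_2)+D_{Rz}(\psi_2||\psi_1)\bigr)$ and using that each quantum Ruzsa divergence is nonnegative by Theorem~\ref{thm:QRD}(1), the assumption $d_{Rz}(\psi_1,\psi_2)=0$ forces
\[
D_{Rz}(\psi_1||\psi_2)=D_{Rz}(\psi_2||\psi_1)=0 .
\]

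Next I would feed $D_{Rz}(\psi_1||\psi_2)=0$ into the equality case of Theorem~\ref{thm:QRD}(1): this says $\psi_1$ lies in the abelian $C^*$-algebra generated by the stabilizer group $G_{\psi_2}$ of $\psi_2$, i.e.\ $\psi_1$ is a convex combination of MSPSs associated with $G_{\psi_2}$. Here the purity of $\psi_1$ enters: a convex combination of density operators that is rank one must agree with each of its components of positive weight, since all of their supports lie inside the one-dimensional support of $\psi_1$. Hence $\psi_1$ is itself a rank-one MSPS; but an MSPS associated with an $r$-generator group has rank $d^{n-r}$, so rank one forces $r=n$ in Definition~\ref{def:MSPS}, and therefore $\psi_1$ is a pure stabilizer state, $\psi_1\in\text{STAB}$. (Applying the same reasoning to $D_{Rz}(\psi_2||\psi_1)=0$ would also show $\psi_2\in\text{STAB}$, though this is not needed.)

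Finally I would take $\phi_{stab}:=\psi_1$ and verify the two displayed vanishings. Since $\psi_1$ is a pure stabilizer state it is in particular an MSPS, so the last clause of Theorem~\ref{thm:QRD}(1) gives $D_{Rz}(\psi_1||\psi_1)=0$, and hence $d_{Rz}(\psi_1,\phi_{stab})=d_{Rz}(\psi_1,\psi_1)=D_{Rz}(\psi_1||\psi_1)=0$. For the other, $d_{Rz}(\psi_2,\phi_{stab})=d_{Rz}(\psi_2,\psi_1)=\tfrac12\bigl(D_{Rz}(\psi_2||\psi_1)+D_{Rz}(\psi_1||\psi_2)\bigr)=0$ by the first step, which finishes the proof. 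I expect the only point requiring genuine care to be the support/rank argument of the third paragraph — equivalently, the observation that a pure state sitting inside the abelian $C^*$-algebra of a stabilizer group forces that group to be maximal; everything else is bookkeeping with Theorem~\ref{thm:QRD}(1) and the symmetrization identity for $d_{Rz}$.
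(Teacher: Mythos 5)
Your proof is correct, but it follows a genuinely different route from the paper's. The paper works directly with characteristic functions: purity of $\psi_1$ and $\psi_2$ and $d_{Rz}(\psi_1,\psi_2)=0$ force $S(\psi_1\boxtimes\psi_2)=0$, and then the normalization $\frac{1}{d^n}\sum_{\vec x}|\Xi_{\psi_i}(\vec x)|^2=1$ together with $|\Xi_{\psi_i}(\vec x)|\le 1$ and the duality $\Xi_{\psi_1\boxtimes\psi_2}(\vec x)=\Xi_{\psi_1}(s\vec x)\Xi_{\psi_2}(t\vec x)$ pins $|\Xi_{\psi_1}|$ and $|\Xi_{\psi_2}|$ to $\{0,1\}$ with support of size $d^n$, whence both states are pure stabilizer states (with matching stabilizer groups) by the support-size criterion of~\cite{Bu19}. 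You instead treat Theorem~\ref{thm:QRD}(1) as a black box: positivity splits the hypothesis into $D_{Rz}(\psi_1||\psi_2)=D_{Rz}(\psi_2||\psi_1)=0$, the equality case places $\psi_1$ in the abelian algebra generated by $G_{\psi_2}$, and your rank argument (a rank-one convex combination of mutually orthogonal rank-$d^{n-r}$ MSPSs forces $r=n$ per Definition~\ref{def:MSPS}) yields $\psi_1\in\mathrm{STAB}$; taking $\phi_{stab}=\psi_1$ then closes the argument, with $d_{Rz}(\psi_2,\phi_{stab})=d_{Rz}(\psi_1,\psi_2)=0$ coming for free from the hypothesis and symmetry of~\eqref{eq:sym_Rz_D}. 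Your version is shorter and more modular, and it cleanly avoids having to show $\psi_2\in\mathrm{STAB}$ at all; the trade-off is that all the analytic content is outsourced to the equality characterization in Theorem~\ref{thm:QRD}(1), which this paper only cites (Theorem 58 of~\cite{BGJ23b}) rather than proves, whereas the paper's computation is self-contained modulo the support-size criterion and delivers the extra information that $\psi_1$ and $\psi_2$ share a stabilizer group. Both arguments are sound.
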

\begin{proof}
   Based on the definition of $d_{Rz}$ and the assumption that $\psi_1$ and $\psi_2$ are pure states, we infer that  $S(\psi_1\boxtimes\psi_2)=0$, i.e., $\psi_1\boxtimes\psi_2$ is a pure state. 
Hence
\begin{eqnarray}\label{eq:con_l2}
   1= \frac{1}{d^n}\sum_{\vec x\in V^n}|\Xi_{\psi_1\boxtimes\psi_2}(\vec x)|^2
=\frac{1}{d^n}\sum_{\vec x\in V^n}|\Xi_{\psi_1}(s\vec x)|^2|\Xi_{\psi_2}(t\vec x)|^2.
\end{eqnarray}
In addition, 
\begin{eqnarray*}
    1=\frac{1}{d^n}\sum_{\vec x\in V^n}|\Xi_{\psi_1}(\vec x)|^2=\frac{1}{d^n}\sum_{\vec x\in V^n}|\Xi_{\psi_2}(\vec x)|^2\;,
\end{eqnarray*}
and 
\begin{eqnarray*}
    |\Xi_{\psi_1}(\vec x)|\leq 1,  |\Xi_{\psi_2}(\vec x)|\leq 1, \forall \vec x\;.
\end{eqnarray*}
Hence, for any $\vec x$, 
\begin{eqnarray}\label{eq:sam_G}
    |\Xi_{\psi_1}(s\vec x)|^2=|\Xi_{\psi_2}(t\vec x)|^2=0 ~\text{or }~1.
\end{eqnarray}
Thus, the sizes of the supports $\text{Supp}(\Xi_{\psi_1})$ and $\text{Supp}(\Xi_{\psi_2})$ are $d^n$.
This implies that both 
$\psi_1$ and $\psi_2$ are  pure stabilizer states, based on the result that a pure state is a stabilizer state, iff the size of the support of its characteristic function is $d^n$~\cite{Bu19}. Moreover, both $\psi_1$ and $\psi_2$ have the same 
stabilizer group as $\psi_1$ up to some phase, based on \eqref{eq:con_l2}. Therefore, we obtain the result.

\end{proof}

\begin{Rem}
A natural generalization of  
Proposition \ref{prop:100_QPFR} is the following conjecture:  
if $d_{Rz}(\psi_1,\psi_2)\leq \epsilon$, then there exists some pure stabilizer state $\phi_{stab}$ such that both 
    $$d_{Rz}(\psi_1,\phi_{stab})\leq c\epsilon\;, \quad
    \text{and}
    \quad
    d_{Rz}(\psi_2,\phi_{stab})\leq c\epsilon\;,$$ where $c$ is some 
    constant independent of $n$. This question is a quantum generalization of  Theorem 1.8 in \cite{gowers2023conjecture}.
\end{Rem}

\subsection{Magic measure via quantum Ruzsa divergence}\label{sec:mRZ}
Since the quantum Ruzsa divergence can be used to characterize the structure of stabilizer states, here we introduce  a  new magic measure via the quantum Ruzsa  divergence in the resource theory of magic. This theory recognizes that magic is essential for achieving a quantum computational advantage.
\begin{Def}\label{Def:MRZ}
The quantum Ruzsa divergence of magic $M_{Rz}(\rho)$ of a state $\rho$ is:
\begin{eqnarray}
M_{Rz}(\rho):=\min_{\sigma\in STAB}
D_{Rz}(\rho||\sigma).
\end{eqnarray}
\end{Def}

Let us first introduce the stabilizer channel which will map stabilizer states to stabilizer states, that is the set of free operations in the resource theory of magic.

\begin{Def}[\bf Stabilizer channel]
        A quantum channel  $\Phi:D(\mathcal{H}_A)\to D(\mathcal{H}_A)$ on an $n$-qudit system  $\mathcal{H}_A=\mathcal{H}^{\ot n}$ is a stabilizer channel if it has the Stinespring representation
    \begin{eqnarray}
        \Phi(\rho)=\Ptr{B}{U(\rho\ot\sigma) U^\dag},
    \end{eqnarray}
    where $\sigma$ is a stabilizer state, and $U:\mathcal{H}_A\ot \mathcal{H}_B\to\mathcal{H}_A\ot\mathcal{H}_B$ is a Clifford unitary.
    
\end{Def}
\begin{thm}
Given an $n$-qudit state $\rho$, the quantum Ruzsa divergence of magic $M_{Rz}(\rho)$ satisfies the following properties:

(1) Faithfulness: $M_{Rz}(\rho)\geq 0$; also for pure state $\rho$,  $M_{Rz}(\rho)=0$, iff $\rho$ is an MSPS. 

(2) Monotonicity under stabilizer channels:
$M_{Rz}(\Phi(\rho)) \le M_{Rz}(\rho)$ for any stabilizer channel $\Phi$.
\end{thm}

\begin{proof}
Property (1) follows  from the positivity in Proposition~\ref{thm:QRD}.

(2) 
To prove the monotonicity of $M_{Rz}$ under a stabilizer channel, we only need to prove the monotonicity of $M_{Rz}$ for three cases:  
(2a) monotonicity under tensor product of a stabilizer state;
(2b) monotonicity under Clifford unitary ; and
(2c) monotonicity under partial trace.

(2a) follows from the fact that additivity under the tensor product in Proposition~\ref{thm:QRD}. 

(2b) comes from the invariance of the quantum Ruzsa divergence under Clifford unitary in Proposition~\ref{thm:QRD},
i.e., $D_{Rz}(U\rho U^\dag||U\sigma U^\dag)=D_{Rz}(\rho||\sigma)$ for any Clifford unitary, 
and the closedness of the stabilizer states under Clifford unitary, i.e., 
$U\sigma U^\dag$ is a stabilizer state, for any Clifford unitary $U$ and stabilizer state $\sigma$.

(2c)
 By the monotonicity of quantum Ruzsa divergence under partial trace  in Proposition~\ref{thm:QRD}, we have 
 \begin{eqnarray*}
     D_{Rz}(\rho||\sigma)\geq D_{Rz}\left(\Ptr{i}{\rho}||\Ptr{i}{\sigma}\right)\geq 
     \min_{\sigma'\in \text{STAB}} D_{Rz}\left(\Ptr{i}{\rho}||\sigma'\right)\;,
 \end{eqnarray*}
 for any $\sigma\in \text{STAB}$, where the last inequality comes from the closedness of stabilizer states under partial trace.

\end{proof}

\begin{Rem}
Note that in previous studies on resource theory, magic measures are typically defined through distance measures, such as the relative entropy of magic~\cite{Veitch14}. These measures quantify the distance between given states and the set of stabilizer states, rather than capturing the structure of stabilizer states. However, the magic measure based on quantum Ruzsa divergence differs from conventional approaches, as it is effective due to its ability to characterize the stabilizer structure (or discrete Gaussian structure).
\end{Rem}

Although the quantum Ruzsa divergence is different from the quantum relative entropy in general, we find that they have the following  relation  by taking minimization over
stabilizer states.
\begin{prop}\label{thm:equiv}
For any quantum state $\rho$, we have 
    \begin{eqnarray}
       M_{Rz}(\rho)\leq\min_{\sigma\in\text{MSPS}}D(\rho||\rho\boxtimes\sigma)\;,
    \end{eqnarray}
    where $D(\rho||\rho\boxtimes\sigma)$ is the quantum relative entropy of $\rho$ with respect to $\rho\boxtimes\sigma$.
\end{prop}

\begin{proof}
First, by the concavity of quantum entropy, the minimization in $ \min_{\sigma\in\text{STAB}}S(\rho\boxtimes\sigma)-S(\rho)$
is taken on pure stabilizer states. Hence, we only need to consider the
case where $\sigma$ is a pure stabilizer state, i.e., $\sigma=\proj{\sigma}$.
Without the loss of generality, we assume that the corresponding stabilizer group  $G_{\sigma}=\langle\set{w(\vec x_1),..., w(\vec x_n)}\rangle$, that is, 
$w(\vec x_i)\ket{\sigma}=\ket{\sigma}$ for any $i\in [n]$. Moreover, the stabilizer group $G_{\sigma}$ will induce an orthonormal basis $B_{G_{\sigma}}=\set{\ket{\sigma_{\vec k}}}_{\vec k\in \mathbb{Z}^n_d}$,
where $w(\vec x_i)\ket{\sigma_{\vec k}}=\omega^{k_i}_d\ket{\sigma_{\vec k}}$ for any $i\in [n]$ with $\vec k=(k_1,...,k_n)\in \mathbb{Z}^n_d$. Here, $\ket{\sigma}=\ket{\sigma_{\vec 0}}$
is one element of this basis. That is, 
the density matrix  can be written as $\proj{\sigma_{\vec k}}=
\Pi^n_{i=1}\mathbb{E}_{l_i=1}^d(\omega^{k_i}_dw(\vec x_i))^{l_i}=\frac{1}{d^n}\sum_{\vec l}
\omega^{\vec k\cdot \vec l}_d
\Pi^n_{i=1}w(l_i\vec x_i)
$.
Then, for any state $\rho$, the average over the stabilizer group $G_{\sigma}$, given by $\mathbb{E}_{w(\vec x) \in G_{\sigma}} w(\vec x) \rho w(\vec x)^\dag$, commutes with each element $w(\vec x) \in G_{\sigma}$. Therefore, this average can be expressed as a convex combination of the common eigenstates of elements in $G_{\sigma}$, that is, the basis $B_{G_{\sigma}}$.
Thus, we have
\begin{eqnarray}
  \mathbb{E}_{w(\vec x)\in G_{\sigma}}
w(\vec x)\rho w(\vec x)^\dag
=\sum_{\vec k}\bra{\sigma_{\vec k}}\rho\ket{\sigma_{\vec k}}
\proj{\sigma_{\vec k}}.
\end{eqnarray}
Let us define $\Delta_{B_{G_{\sigma}}}$ as 
the fully-dephasing channel with respect to basis $B_{G_{\sigma}}$ , i.e., 
$    \Delta_{B_{G_{\sigma}}}(\rho)
    =\sum_{\vec k}\bra{\sigma_{\vec k}}\rho\ket{\sigma_{\vec k}}
\proj{\sigma_{\vec k}}.$
Then 
\begin{align}
    \mathbb{E}_{w(\vec x)\in G_{\sigma}}
w(\vec x)\rho w(\vec x)^\dag=  \Delta_{B_{G_{\sigma}}}(\rho).
\end{align}
Then  $\rho\boxtimes\sigma$ is diagonal in the basis $B_{G_{\sigma}}$.
This is because 
\begin{align*}
\rho\boxtimes\sigma
=&\rho\boxtimes(\mathbb{E}_{w(\vec x)\in G_{\sigma}}w(\vec x)\sigma w(\vec x)^\dag)\\
=&\mathbb{E}_{w(\vec x)\in G_{\sigma}}\rho\boxtimes(w(\vec x)\sigma w(\vec x)^\dag)\\
=&\mathbb{E}_{w(\vec x)\in G_{\sigma}}w(t\vec x)(\rho\boxtimes\sigma) w(t\vec x)^\dag\\
=&\mathbb{E}_{w(\vec x)\in G_{\sigma}}
(w(s^{-1}t\vec x)\rho w(s^{-1}t\vec x)^\dag)\boxtimes\sigma,
\end{align*}
where the first line comes from the fact that $w(\vec x)\sigma w(\vec x)^\dag =\sigma$ for any 
$w(\vec x)\in G_{\sigma}$, and the 
last two lines come from the following property (See Proposition 41 in\cite{BGJ23b})
\begin{align*}
&\mathcal{E}(w(\vec x)\ot w(\vec y)\rho_{AB}w(\vec x)^\dag\ot w(\vec y)^\dag)\\
=&w(s\vec x+t\vec y)\mathcal{E}(\rho_{AB})w(s\vec x+t\vec y)^\dag,
\end{align*}
where $\mathcal{E}(\rho_{AB})=\Ptr{B}{U_{s,t}\rho_{AB}U^\dag_{s,t}}$.
Since $G_{\sigma}$ is an abelian group, then $w(\vec x)\in G_{\sigma}$ implies that $w(s^{-1}t\vec x)\in G_{\sigma}$. Hence,  
$\mathbb{E}_{w(\vec x)\in G_{\sigma}}
w(s^{-1}t\vec x)\rho w(s^{-1}t\vec x)^\dag
=\mathbb{E}_{w(\vec x)\in G_{\sigma}}
w(\vec x)\rho w(\vec x)^\dag$. Then we have 
\begin{align}\label{eq:fulld}
    \rho\boxtimes\sigma
=\Delta_{B_{G_{\sigma}}}(\rho)\boxtimes\sigma
=\Delta_{B_{G_{\sigma}}}(\rho\boxtimes\sigma),
\end{align}
which is diagonal in the basis 
$B_{G_{\sigma}}$.

Since 
$  \Xi_{ \proj{\sigma_{\vec k}}
    \boxtimes\sigma}
    =\Xi_{\proj{\sigma_{\vec k}}}(s\vec x)
    \Xi_{\sigma}(t\vec x)$
and  $  \proj{\sigma_{\vec k}}$ are stabilizer states with respect to the same abelian group $G_{\sigma}$ for different eigenvalues, then 
$\proj{\sigma_{\vec k}}
    \boxtimes \sigma$ is also a stabilizer state with respect to the same abelian group $G_{\sigma}$, i.e., $\proj{\sigma_{\vec k}}
    \boxtimes \sigma\in B_{G_{\sigma}}$. Moreover, 
\begin{eqnarray}
    \proj{\sigma_{\vec k}}
    \boxtimes\proj{\sigma}
    =\proj{\sigma_{s\vec k}},\quad \forall \vec k\in \mathbb{Z}^n_d.
\end{eqnarray}
That is, 
\begin{eqnarray}
    \rho\boxtimes\sigma
    =\Delta_{B_{G_{\sigma}}}(\rho\boxtimes\sigma)=
    \sum_{\vec k}\bra{\sigma_{\vec k}}\rho\ket{\sigma_{\vec k}}
\proj{\sigma_{s\vec k}}.
\end{eqnarray}
Then 
\begin{align*}
    S( \rho\boxtimes\sigma)
    =&S(\sum_{\vec k}\bra{\sigma_{\vec k}}\rho\ket{\sigma_{\vec k}}
\proj{\sigma_{\Pi(\vec k)}})\\
=&-\sum_{\vec k}\bra{\sigma_{\vec k}}\rho\ket{\sigma_{\vec k}}\log \bra{\sigma_{\vec k}}\rho\ket{\sigma_{\vec k}}\\
=&S(\Delta_{B_{\sigma}}(\rho)).
\end{align*}
Thus, 
    \begin{eqnarray}
\min_{\sigma\in\text{STAB}}S(\rho\boxtimes\sigma)-S(\rho)=\min_{B_{STAB}}S(\Delta_{B_{STAB}}(\rho))-S(\rho),
    \end{eqnarray}
    where $\min_{B_{STAB}}$ denotes the minimization over all the orthonormal basis generated by the 
    stabilizer states, and $\Delta_{B_{STAB}}$ is the corresponding fully-dephasing channel.

    Moreover, by using  \eqref{eq:fulld}, we can also get
    \begin{eqnarray*}
        -\Tr{\rho\log\rho\boxtimes\sigma}
        &=&-\Tr{\rho\log\Delta_{B_{G_{\sigma}}}(\rho\boxtimes\sigma)}\\
        &=&-\Tr{\Delta_{B_{G_{\sigma}}}(\rho)\log\Delta_{B_{\sigma}}(\rho\boxtimes\sigma)}\\
        &\geq& -\Tr{\Delta_{B_{G_{\sigma}}}(\rho)\log\Delta_{B_{G_{\sigma}}}(\rho)},
    \end{eqnarray*}
    where the last line comes from the non-negativity of the relative entropy $D(\Delta_{B_{G_{\sigma}}} (\rho)||\Delta_{B_{G_{\sigma}}}(\rho\boxtimes\sigma))$.
    Hence, we have 
$$\min_{\sigma\in\text{MSPS}}D(\rho||\rho\boxtimes\sigma)\geq \min_{B_{STAB}}S(\Delta_{B_{STAB}}(\rho))-S(\rho).$$
This is the desired result.
\end{proof}

\subsection{Convolutional strong  subadditivity}\label{sec:CSSA}
Recall that in the classical setting, when considering three independent random variables $\mathcal{X}, \mathcal{Y}, \mathcal{Z}$ on $\mathbb{Z}_d$, the classical Ruzsa divergence $D_{Rz}(\mathcal{X}||\mathcal{Y}):=H(\mathcal{X}+\mathcal{Y})-H(\mathcal{X})$ satisfies the triangle inequality (see Theorem 1 in~\cite{MadimanIEEE18}): 
\begin{eqnarray}
D_{Rz}(\mathcal{X}||\mathcal{Z}) \leq D_{Rz}(\mathcal{X}||\mathcal{Y}) + D_{Rz}(\mathcal{Y}||\mathcal{Z}), 
\end{eqnarray} 
which can also be expressed as: 
\begin{align}
H(\mathcal{X}+\mathcal{Z}) + H(\mathcal{Y}) \leq H(\mathcal{X}+\mathcal{Y}) + H(\mathcal{Y}+\mathcal{Z}). 
\end{align} Here, $H(\mathcal{X})$ represents the Shannon entropy of the random variable. This inequality is derived from a more general inequality: 
\begin{align} 
H(\mathcal{X}+\mathcal{Y}+\mathcal{Z}) + H(\mathcal{Y}) \leq H(\mathcal{X}+\mathcal{Y}) + H(\mathcal{Y}+\mathcal{Z}), \end{align} 
which follows directly from the data processing inequality: 
\begin{align} 
I(\mathcal{X}:\mathcal{X}+\mathcal{Y}+\mathcal{Z}) \leq I(\mathcal{X}:\mathcal{X}+\mathcal{Y}). 
\end{align} 


Here, we are also interested in whether the quantum Ruzsa divergence satisfies the triangle inequality. Hence, we propose the following conjecture:

\begin{con}\label{Conj:sub_RZ}
Given a balanced convolution $\boxtimes_{s,t}$, i.e., $s\equiv t \mod d$, the quantum Ruzsa divergence 
satisfies  the triangle inequality 
    \begin{eqnarray}
        D_{Rz}(\rho||\tau)\leq D_{Rz}(\rho||\sigma)+D_{Rz}(\sigma||\tau)\;,
    \end{eqnarray}
which is equivalent to
    \begin{eqnarray}\label{ineq:tri_equiv}
        S(\rho\boxtimes\tau)+S(\sigma)
        \leq S(\rho\boxtimes\sigma)+S(\sigma\boxtimes\tau)\;.
    \end{eqnarray}
\end{con}
Note that the inequality~\eqref{ineq:tri_equiv} is not balanced; the state $\sigma$ appears once on the left-hand side but twice on the right-hand side. Hence, we
proposed the following balanced version, which we call "convolutional strong subadditivity".

\begin{con}[\bf Convolutional strong subadditivity]
Given three $n$-qudit quantum states $\rho,\sigma, \tau$, we have 
\begin{eqnarray}\label{ineq:CSSA}
     S(\rho\boxtimes\tau\boxtimes\sigma)+S(\sigma)
        \leq S(\rho\boxtimes\sigma)+S(\sigma\boxtimes\tau)\;,
\end{eqnarray}
where $\rho\boxtimes\tau\boxtimes\sigma:=(\rho\boxtimes_{s,t}\tau)\boxtimes_{l,m}\sigma$, $\rho\boxtimes\sigma:=\rho\boxtimes_{s,t}\sigma$, $\sigma\boxtimes\tau:=\sigma\boxtimes_{s,t}\tau$, $s^2+t^2\equiv 1\mod d$, $s\equiv t\mod d$, $m\equiv ls \mod d$, and $l^2+m^2\equiv 1\mod d$.
\end{con}
The reason that we choose the parameters in this way is
to generalize  the classical balanced convolution on three random variables $\frac{\mathcal{X}+\mathcal{Y}+\mathcal{Z}}{\sqrt{3}}=\sqrt{\frac{2}{3}}\left(\frac{\mathcal{X}+\mathcal{Y}}{\sqrt{2}}\right)+\frac{1}{\sqrt{3}}\mathcal{Z}$
to the quantum case.  

\begin{lem}
 If the convolutional strong subadditivity holds, then 
 the triangle inequality of quantum Ruzsa divergence also holds.
 
\end{lem}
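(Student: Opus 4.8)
The plan is to deduce the lemma directly from the equivalence already recorded in the excerpt. By the definition $D_{Rz}(\mu||\nu)=S(\mu\boxtimes\nu)-S(\mu)$, the claimed triangle inequality $D_{Rz}(\rho||\tau)\le D_{Rz}(\rho||\sigma)+D_{Rz}(\sigma||\tau)$ is, after cancelling the common $S(\rho)$, exactly inequality~\eqref{ineq:tri_equiv}, namely $S(\rho\boxtimes\tau)+S(\sigma)\le S(\rho\boxtimes\sigma)+S(\sigma\boxtimes\tau)$, with every convolution the balanced $\boxtimes_{s,t}$. Comparing this with the hypothesis~\eqref{ineq:CSSA}, the right-hand sides are literally identical, and the left-hand sides differ only in that \eqref{ineq:CSSA} carries the extra convolution factor $\sigma$: it has $S(\rho\boxtimes\tau\boxtimes\sigma)=S\bigl((\rho\boxtimes_{s,t}\tau)\boxtimes_{l,m}\sigma\bigr)$ where \eqref{ineq:tri_equiv} has $S(\rho\boxtimes_{s,t}\tau)$. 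So the whole task reduces to showing $S(\rho\boxtimes_{s,t}\tau)\le S\bigl((\rho\boxtimes_{s,t}\tau)\boxtimes_{l,m}\sigma\bigr)$.

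That inequality is nothing but the basic entropy monotonicity under quantum convolution, $S(\mu\boxtimes\nu)\ge\max\{S(\mu),S(\nu)\}$, proved in~\cite{BGJ23a,BGJ23b} and already used in the proof of Theorem~\ref{thm:QRD}(1); I would apply it with $\mu=\rho\boxtimes_{s,t}\tau$, $\nu=\sigma$, for the convolution $\boxtimes_{l,m}$, which is an admissible quantum convolution because $l^2+m^2\equiv 1\bmod d$ with $l,m\neq 0$. Chaining this with the hypothesis,
\begin{equation*}
S(\rho\boxtimes\tau)+S(\sigma)\ \le\ S(\rho\boxtimes\tau\boxtimes\sigma)+S(\sigma)\ \le\ S(\rho\boxtimes\sigma)+S(\sigma\boxtimes\tau),
\end{equation*}
where the first step is the entropy monotonicity and the second is~\eqref{ineq:CSSA}. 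This is precisely~\eqref{ineq:tri_equiv}; adding back $S(\rho)$ and re-expressing through $D_{Rz}(\mu||\nu)=S(\mu\boxtimes\nu)-S(\mu)$ gives the triangle inequality, which is what we wanted.

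There is no genuine analytic obstacle here once~\eqref{ineq:CSSA} is granted: the content of the lemma is just the bookkeeping observation that inserting one more convolution factor cannot decrease the von Neumann entropy. The only point requiring care is parameter matching — one must check that the two-fold convolutions appearing in~\eqref{ineq:CSSA} are the same balanced convolution $\boxtimes_{s,t}$ that appears in the triangle-inequality statement (which is how~\eqref{ineq:CSSA} is set up, via $s\equiv t\bmod d$), and that the outer parameters $(l,m)$ really do define a quantum convolution so that the monotonicity step applies verbatim. The hard part, of course, lies elsewhere: in establishing the conjectured inequality~\eqref{ineq:CSSA} itself, which this lemma does not attempt.
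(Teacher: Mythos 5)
Your argument is correct and is essentially identical to the paper's proof: both reduce the triangle inequality to the equivalent form~\eqref{ineq:tri_equiv} and then bridge the gap to~\eqref{ineq:CSSA} via the entropy inequality $S(\mu\boxtimes\nu)\geq\max\{S(\mu),S(\sigma)\}$ applied to $\mu=\rho\boxtimes\tau$ and $\nu=\sigma$. Your added remarks on parameter matching are a reasonable elaboration but do not change the substance.
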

\begin{proof}
    This comes from the entropy inequality for quantum convolution 
    $ S(\rho\boxtimes\tau\boxtimes\sigma)\geq \max\set{ S(\rho\boxtimes\tau)\;, S(\sigma)}$ \cite{BGJ23a,BGJ23b}.
\end{proof}

Here, let us show that the convolutional strong subadditivity holds with some additional 
assumption on the states.
\begin{prop}
The convolutional strong subadditivity holds for 
the following two cases,

(1) the quantum states
$\rho, \sigma$, and $ \tau$ are all stabilizer states; 

(2) the quantum states
$\rho, \sigma$, and $ \tau$ are diagonal states in the computational basis.

\end{prop}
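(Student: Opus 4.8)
The plan is to verify the inequality~\eqref{ineq:CSSA} directly in the two stated regimes, where in both cases the entropy difference $S(\cdot\boxtimes\cdot)-S(\cdot)$ collapses to something transparent.

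\textit{Case (1): stabilizer states.} The key fact is that stabilizer states are quantum-Gaussian, i.e., fixed points of the convolutional flow: if $\rho,\sigma,\tau\in\text{STAB}$, then $\rho\boxtimes_{s,t}\sigma$, $\sigma\boxtimes_{s,t}\tau$, and $(\rho\boxtimes_{s,t}\tau)\boxtimes_{l,m}\sigma$ are again stabilizer states, since the convolution unitary $U_{s,t}$ is Clifford on the doubled system and the partial trace of a stabilizer state over a subsystem is a (mixed) stabilizer state. First I would use Lemma~\ref{lem:ent_equ} together with $D_{Rz}(\rho\|\sigma)=S(\rho\boxtimes\sigma)-S(\rho)$: for stabilizer states the entropy is $S(\rho)=\log\frac{d^n}{|G_\rho|}$ where $G_\rho$ is the (maximal, size $d^{2n-r}$ truncated appropriately) stabilizer group, and convolution of stabilizer states simply intersects/combines the underlying isotropic subspaces in phase space $V^n$. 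Concretely, writing each stabilizer state via its defining isotropic subspace $M_\rho\subseteq V^n$ (the support of its characteristic function), one has $S(\rho)=n\log d-\tfrac12\log|M_\rho|$ (up to normalization conventions), and the support of $\Xi_{\rho\boxtimes_{s,t}\sigma}$ is governed by $M_\rho\cap M_\sigma$ on the relevant coordinates. Then~\eqref{ineq:CSSA} reduces to a submodularity statement for $\log|\cdot|$ applied to these subspaces — i.e., $|M_{\rho\boxtimes\tau\boxtimes\sigma}|\cdot|M_\sigma|\ge|M_{\rho\boxtimes\sigma}|\cdot|M_{\sigma\boxtimes\tau}|$ — which is a finite-dimensional linear-algebra inequality of the form $\dim(A+B)+\dim(C)\le \dim(A'+C)+\dim(C+B)$ for suitable subspaces, provable by the standard dimension formula $\dim(U+W)=\dim U+\dim W-\dim(U\cap W)$ and monotonicity of intersections. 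I would carry this out by unwinding the definitions of $\mathcal{M}$ and $G_\rho$ and tracking phase-space supports through the parameters $s\equiv t$, $m\equiv ls$.

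\textit{Case (2): diagonal states in the computational basis.} Here the key observation is that the quantum convolution $\boxtimes_{s,t}$ restricted to diagonal states reduces exactly to the \emph{classical} convolution of the corresponding probability vectors on $\mathbb{Z}_d^n$. Indeed, if $\rho=\sum_{\vec a}p(\vec a)\proj{\vec a}$ and $\sigma=\sum_{\vec b}q(\vec b)\proj{\vec b}$ are diagonal, then $U_{s,t}(\rho\otimes\sigma)U_{s,t}^\dagger$ is diagonal in the computational basis (since $U_{s,t}$ is a permutation of basis vectors $|\vec i\rangle|\vec j\rangle\mapsto|s\vec i+t\vec j\rangle|{-t\vec i+s\vec j}\rangle$), and tracing out $B$ gives the diagonal state with probability vector $(p\boxtimes_{s,t} q)(\vec c)=\sum_{s\vec i+t\vec j=\vec c}p(\vec i)q(\vec j)$, which — since $s$ is invertible mod $d$ — is just a scaled classical convolution $p*q$ up to the bijective reparametrization $\vec j\mapsto s\vec j$. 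Consequently $S(\rho\boxtimes\sigma)=H(p\boxtimes q)$, the Shannon entropy, and~\eqref{ineq:CSSA} becomes the classical statement $H(\mathcal{X}+\mathcal{Y}+\mathcal{Z})+H(\mathcal{Y})\le H(\mathcal{X}+\mathcal{Y})+H(\mathcal{Y}+\mathcal{Z})$ for independent $\mathbb{Z}_d^n$-valued random variables — which is precisely the entropic submodularity / Kaimanovich–Vershik–Madiman inequality, a consequence of Shannon's strong subadditivity. I would cite this classical sumset-entropy inequality (e.g.~\cite{tao2006additive,MadimanIEEE18}) after reducing to it.

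\textit{Main obstacle.} The routine part is Case (2); the work is in Case (1), specifically in getting the bookkeeping of stabilizer groups and phase-space supports right under the three-fold convolution with the mixed parameter sets $(s,t)$ and $(l,m)$ — in particular checking that the support of $\Xi_{(\rho\boxtimes_{s,t}\tau)\boxtimes_{l,m}\sigma}$ really is the subspace-theoretic object I claimed (the naive intersection can be complicated by the fact that convolution of \emph{mixed} MSPSs can increase rank), so I expect the delicate step to be establishing the submodular inequality for the associated subspaces rather than merely invoking the dimension formula. An alternative, cleaner route for Case (1) that sidesteps explicit phase-space combinatorics: note all five entropies in~\eqref{ineq:CSSA} are of MSPSs, apply Lemma~\ref{lem:ent_equ} to rewrite each $S(\cdot)$ as $n\log d$ minus a relative entropy to the maximally mixed state, and then invoke monotonicity of relative entropy under the convolution channel together with its data-processing structure — I would try this first and fall back on the explicit computation only if the channel-theoretic argument does not close.
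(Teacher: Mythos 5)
Your proposal is correct and follows essentially the same route as the paper: in case (1) both reduce, via convolution--multiplication duality, to entropies of the form $\log\bigl(d^n/|G|\bigr)$ for intersected stabilizer groups and then to the subgroup submodularity $|G_\rho\cap G_\sigma\cap G_\tau|\,|G_\sigma|\ge|G_\rho\cap G_\sigma|\,|G_\sigma\cap G_\tau|$ (the paper phrases this as the index inequality $[H:H\cap K]\le[G:K]$, you as the dimension formula on the subgroup lattice — the same fact); in case (2) both reduce the quantum convolution of diagonal states to classical convolution on $\mathbb{Z}_d^n$ and invoke the classical submodularity $H(\mathcal{X}+\mathcal{Y}+\mathcal{Z})+H(\mathcal{Y})\le H(\mathcal{X}+\mathcal{Y})+H(\mathcal{Y}+\mathcal{Z})$, which the paper derives from data processing. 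The only cosmetic discrepancy is your hedged factor of $\tfrac12$ in $S(\rho)=n\log d-\tfrac12\log|M_\rho|$, which should simply be $S(\rho)=n\log d-\log|G_\rho|$.
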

\begin{proof}
(1) 
Let us assume that  quantum states
$\rho, \sigma$, and $ \tau$ are all stabilizer states with 
stabilizer group $G_{\rho}$, $G_{\sigma}$ and $G_{\tau}$. Then 
the absolute value of the characteristic $\Xi_{\rho}$ is either equal to 
$1$ or $0$, and $\Xi_{\rho}(\vec x)$ equals  $1$ iff the Weyl operator $w(\vec x)$ belongs 
to the stabilizer group. Similar arguments also work for $\sigma$ and $\tau$.

Since 
\begin{align*}
    \Xi_{\rho\boxtimes\tau\boxtimes\sigma}(\vec x)
    =\Xi_{\rho}(ls\vec x)\,\Xi_{\tau}(lt\vec x)\,
    \Xi_{\sigma}(m\vec x)\;,
\end{align*}
with $s^2+t^2\equiv 1\mod d, l^2+m^2\equiv 1\mod d$, 
we have  $|\Xi_{\rho\boxtimes\tau\boxtimes\sigma}(\vec x)|$ is equal to $1$ or $0$, and 
\begin{eqnarray*}
    |\Xi_{\rho\boxtimes\tau\boxtimes\sigma}(\vec x)|=1\;,~~
    \text{iff}~~w(\vec{x})\in G_{\rho}\cap G_{\sigma}\cap G_{\sigma}\;.
\end{eqnarray*}
That is, $\rho\boxtimes \tau\boxtimes\sigma$ is a stabilizer state with 
$G_{\rho}\cap G_{\sigma}\cap G_{\tau}$ as the stabilizer group, 
and the  quantum entropy is
\begin{eqnarray*}
    S(\rho\boxtimes \tau\boxtimes\sigma)
    =\log\frac{d^n}{|G_{\rho}\cap G_{\sigma}\cap G_{\tau}|}\;.
\end{eqnarray*}

Using the same reasoning, we can also prove 
that $\rho\boxtimes\sigma$ is a stabilizer state 
with $G_{\rho}\cap G_{\sigma}$ as its stabilizer group, 
and the quantum entropy is 
\begin{eqnarray*}
        S(\rho\boxtimes \sigma)
    =\log\frac{d^n}{|G_{\rho}\cap G_{\sigma}|}\;.
\end{eqnarray*}
Similarly, $\sigma\boxtimes\tau$
is a stabilizer state with $G_{\sigma}\cap G_{\tau}$ as its stabilizer group, and the 
quantum entropy is 
\begin{eqnarray*}
        S(\sigma\boxtimes \tau)
    =\log\frac{d^n}{|G_{\sigma}\cap G_{\tau}|}\;.
\end{eqnarray*}
Besides, the quantum entropy of the stabilizer state $\sigma$ is 
\begin{eqnarray*}
    S(\sigma)=\log\frac{d^n}{|G_{\sigma}|}.
\end{eqnarray*}

Hence, to prove the convolutional strong subadditivity, it is equivalent to proving the following statement, 
\begin{eqnarray*}
    \frac{1}{|G_{\rho}\cap G_{\tau}\cap G_{\sigma}||G_{\sigma}|}
    \leq  \frac{1}{|G_{\rho}\cap G_{\sigma}||G_{\sigma}\cap G_{\tau}|}\;.
\end{eqnarray*}
That is 
\begin{eqnarray*}
    \frac{|G_{\rho}\cap G_{\sigma}|}{|G_{\rho}\cap G_{\sigma}\cap G_{\tau}|}\leq \frac{|G_{\sigma}|}{|G_{\sigma}\cap G_{\tau}|}\;.
\end{eqnarray*}

Since $G_{\rho}, G_{\sigma}, G_{\tau}$ are all finite abelian group, then 
the above statement is equivalent to the 
\begin{eqnarray*}
    |G_{\rho}\cap G_{\sigma}:G_{\rho}\cap G_{\sigma}\cap G_{\tau}|
    \leq |G_{\sigma}:G_{\sigma}\cap G_{\tau}|\;,
\end{eqnarray*}
where the $[G:H]$ is the index of a subgroup H in a group G.
This comes from the following property of the index of groups:
if $H,K$ are subgroups of $G$, then 
\begin{eqnarray*}
    |H:H\cap K|
    \leq |G:K|.
\end{eqnarray*}

(2) 
Let us consider that $2s^2\equiv 1\mod d$, $m\equiv ls$, and $l^2+m^2\equiv 1\mod d$.
Since $\rho,\sigma, \tau$ are diagonal in the computational basis, they can be written as follows
\begin{align}
    \rho=\sum_{\vec x\in \mathbb{Z}^n_d}p(\vec x)\proj{\vec x},~
    \sigma=\sum_{\vec x\in \mathbb{Z}^n_d}q(\vec x)\proj{\vec x},~
    \tau=\sum_{\vec x\in \mathbb{Z}^n_d}r(\vec x)\proj{\vec x}\;,
\end{align}
where $\set{p(\vec x)},\set{q(\vec x)},\set{r(\vec x)}$ are probability distributions 
on $\mathbb{Z}^n_d$.

Let us consider three random variables $\mathcal{X}, \mathcal{Y}, \mathcal{Z}$, which takes values in $\mathbb{Z}^n_d$ as 
follows
\begin{align*}
    \text{Pr}[\mathcal{X}=\vec x]=p(\vec x)\;, \quad
    \text{Pr}[\mathcal{Y}=\vec x]=q(\vec x)\;, \quad
    \text{Pr}[\mathcal{Z}=\vec x]=r(\vec x).
\end{align*}
Then, after some calculation, we find that 
\begin{align*}
    S(\rho\boxtimes\tau\boxtimes\sigma)=H(\mathcal{X}+\mathcal{Y}+\mathcal{Z})\;,\quad S(\sigma)=H(\mathcal{Y});\\
    S(\rho\boxtimes\sigma)=H(\mathcal{X}+\mathcal{Y})\;,\quad S(\sigma\boxtimes\tau)=H(\mathcal{Y}+\mathcal{Z})\;,
\end{align*}
where $H(\mathcal{X})$ is the Shannon entropy of the discrete random variable $\mathcal{X}$, and 
$\mathcal{X}+\mathcal{Y}$ is the sum of random variables $\mod d$.
Hence, the convolutional strong-subadditivity in this case is reduced to 
the classical case
\begin{align}
    H(\mathcal{X}+\mathcal{Y}+\mathcal{Z})+H(\mathcal{Y})
    \leq H(\mathcal{X}+\mathcal{Y})+H(\mathcal{Y}+\mathcal{Z}),
\end{align}
which is true.

\end{proof}

Note that, in the convolutional strong subadditivity~\eqref{ineq:CSSA}, we consider the quantum convolution of three states as 
$(\rho\boxtimes\tau)\boxtimes\sigma$, where $s^2+t^2\equiv 1\mod d$, $s\equiv t\mod d$, $m\equiv ls \mod d$, and $l^2+m^2\equiv 1\mod d$.
We may also consider other possible quantum convolutions on three input states, like the one in~\eqref{eq:qub_con} defined 
on an $n$-qubit system.

Moreover, the usual strong subadditivity in quantum information theory is: given  a tripartite state $\rho_{ABC}$, 
    it holds that $S(\rho_{ABC})+S(\rho_C)\leq S(\rho_{AC})+S(\rho_{BC})$, which was conjectured by Robinson, Ruelle~\cite{RobinsonCMP67} and  Lanford, Robinson~\cite{Lanford68} and later proved by Lieb and Ruskai~\cite{LiebJMP73}.
  Our inequality~\eqref{ineq:CSSA}
    shares a similar form, so that we call 
    \eqref{ineq:CSSA} as "Convolutional strong subadditivity". Besides the strong subadditivity, there is a subadditivity for bipartite states, i.e., 
    $S(\rho_{AB})\leq S(\rho_A)+S(\rho_B)$. It is natural 
    to consider the convolutional subadditivity or supadditivity. However, neither of them holds, as 
    we can give some counterexamples in the following theorem.

\begin{thm}[\bf No subadditivity or supadditivity for the quantum convolution]\label{thm:sub_con}
(1) There exist quantum states $\rho$ and $\sigma$ such that 
\begin{eqnarray}
    S(\rho\boxtimes\sigma)>S(\rho)+S(\sigma).
\end{eqnarray}

(2) There also exist quantum states $\rho, \sigma$ such that 
\begin{eqnarray}
    S(\rho\boxtimes\sigma)<S(\rho)+S(\sigma).
\end{eqnarray}

\end{thm}
\begin{proof}
(1)
Let us take $\rho$ to be the eigenstate of Pauli $Z$ operator corresponding to $+1$ eigenvalue, i.e., 
\begin{eqnarray*}
    \rho=\frac{1}{d^n}\sum_{\vec a\in \mathbb{Z}^n_d} Z^{\vec a}\;,
\end{eqnarray*}
and $\sigma $ to be the eigenstate of Pauli $X$ operator corresponding to $+1$ eigenvalue, i.e., 
\begin{eqnarray*}
    \sigma=\frac{1}{d^n}\sum_{\vec a\in \mathbb{Z}^n_d} X^{\vec a}\;.
\end{eqnarray*}
Then $S(\rho)=S(\sigma)=0$ and $\rho,\sigma$ are not commuting with each other. However, 
$\rho\boxtimes\sigma=\frac{I}{d^n}$, as 
\begin{eqnarray*}
    \Xi_{\rho\boxtimes\sigma}(\vec x)
    =\Xi_{\rho}(s\vec x)\Xi_{\sigma}(t\vec x)
    =0, 
\end{eqnarray*}
for any $\vec x\neq 0$. 
Hence $$S(\rho\boxtimes\sigma)=n\log d>S(\rho)+S(\sigma)=0.$$

(2) 
Let us take both $\rho$ and $\sigma$ to be the maximally mixed state $I/d^n$, then 
$$S(\rho)=S(\sigma)=n\log d.$$
Moreover, $\rho\boxtimes\sigma=I/d$, then 
$$S(\rho\boxtimes\sigma)=n\log d<S(\rho)+S(\sigma)=2n\log d.$$
\end{proof}

In the classical setting, convolution satisfies a fundamental inequality: 
$H(\mathcal{X}+\mathcal{Y})\leq H(\mathcal{X})+H(\mathcal{Y})$~\cite{tao2006additive,cover1999elements}. However, in the quantum domain, such inequality is not generally valid, as demonstrated in Theorem~\ref{thm:sub_con}.  This illustrates a fundamental discrepancy between the classical and quantum settings regarding convolutional inequalities.

Usually, the discrepancy 
between classical and quantum settings stems from
the quantum feature. For example, 
in classical information theory, the conditional 
entropy $H(\mathcal{X}|\mathcal{Y})$ is nonnegative. 
However, in quantum information theory, the conditional quantum entropy 
can be negative ~\cite{CerfPRL97,horodecki2005partial},
a phenomenon that can detect the presence of quantum entanglement and offer advantages in quantum communication.
In this work, we 
 explore how this discrepancy in convolutional subadditivity could be utilized to detect magic in quantum states.

\begin{prop}[Violation of convolutional subadditivity implies magic]
 If $\rho$ is an $n$-qudit MSPS, then subadditivity holds, that is
 \begin{align}
     S(\rho\boxtimes\rho)\leq 2S(\rho).
 \end{align}
Hence, for pure state $\rho$,  the violation of convolutional subadditivity $\rho$  implies that $\rho$  is a magic state. 
 \end{prop}
\begin{proof}
 If $\rho$ is an MSPS, then $\rho=\mathcal{M}(\rho)$, which is fixed under quantum convolution $\boxtimes$ up to some Weyl operator. That is, there exists some Weyl operator $w(\vec x)$ such that
    $\rho\boxtimes \rho =w(\vec x)\rho w(\vec x)^\dag$. 
    Hence, $ S(\rho\boxtimes\rho)=S(w(\vec x)\rho w(\vec x)^\dag)=S(\rho)\leq 2S(\rho)$. Hence, the convolutional subadditivity holds in this case.

Moreover, if $\rho$ is a pure state,  then $S(\rho)=0$. 
If the pure state $\rho$ is a stabilizer state, then $\rho\boxtimes \rho$
is also a pure stabilizer state. 
Then the violation of convolutional subadditivity means $\rho\boxtimes\rho$ is a mixed state. Therefore, the pure state $\rho$ is nonstabilizer, i.e., a magic state.

\end{proof}

\begin{Rem}
Note that the violation of convolutional subadditivity is also closely related to the negative conditional quantum entropy.
Consider the bipartite state obtained by applying a convolutional unitary on the tensor product of two input states $\rho$, 
 i.e.,  $\rho_{AB}=U_{s,t}(\rho\ot \rho) U^\dag_{s,t}$. Then, the reduced state $\rho_A$ on the subsystem $A$ is
$\rho_A=\Ptr{B}{\rho_{AB}}=\Ptr{B}{U_{s,t}(\rho\ot \rho)U^\dag_{s,t}}=\rho\boxtimes\rho$. Hence, the conditional quantum entropy $S(B|A)$ is
\begin{align}
    \nonumber S(B|A)=&S(\rho_{AB})-S(\rho_A)=S(U_{s,t}(\rho\ot \rho) U^\dag_{s,t})-S(\rho\boxtimes\rho)\\
    =&2S(\rho)-S(\rho\boxtimes\rho).
\end{align}
Therefore,  a negative conditional quantum entropy is equivalent to the violation of the convolutional subadditivity. 
This equivalence indicates that we are using the quantum entanglement to detect the magic of states by employing the convolutional unitary to entangle the independent input states.

\end{Rem}

\subsection{Quantum-doubling constant and quantum inverse sumset theorem}\label{sec:QIST}
In this section, we use the quantum Ruzsa divergence 
to introduce a quantum-doubling constant and 
quantum inverse sumset theorem to further explore the stabilizer structure 
of quantum states. Before that, we first recall the classical inverse sumset theorem from Tao's work~\cite{tao_2010}, where some quantity, called doubling constant, 
was introduced to quantify the distance between 
the given random variable and the uniform distribution on the (cosets of) finite  subgroups.
\begin{lem}[Theorem 1.1 in \cite{tao_2010}]
Let $G$ be an additive group, $\mathcal{X}$ be a random variable taking value from $G$, 
and doubling constant $\delta(\mathcal{X})=\exp(H(\mathcal{X}+\mathcal{X}')-H(\mathcal{X}))$ with 
$H(\cdot)$ being the Shannon entropy. Then we have\\
(1) $\delta(\mathcal{X})=1$ if and only if $\mathcal{X}$ is the uniform distribution on a coset of a finite
subgroup of $G$;\\
(2)If $\delta(\mathcal{X})\leq C$ where $C$ is a constant, then there exists a coset progression $H+P$ of rank $O_K(1)$
such that distance between $\mathcal{X}$ and  the uniform distribution on $H+P$ is $\ll_K1$.
\end{lem}

Inspired by the classical work, we introduce a quantum version of doubling constant based on our quantum convolution.

\begin{Def}[\bf Quantum-doubling constant]\label{def:QDC}
Given an $n$-qudit state $\rho$, the quantum-doubling constant is 
\begin{eqnarray}
    \delta_q[\rho]
    =\exp(S(\rho\boxtimes\rho)-S(\rho)),
\end{eqnarray}
where $\boxtimes$ denotes the quantum convolution in Definition~\ref{def:conv}.
In general, the $\alpha$-order quantum-doubling constant is 
\begin{eqnarray}
\delta_{q,\alpha}[\rho]
    =\exp(S_{\alpha}(\rho\boxtimes\rho)-S_{\alpha}(\rho)).
\end{eqnarray}
\end{Def}

Based on the definition, the quantum-doubling constant 
is equal to $\exp(D_{Rz}(\rho||\rho))$, which is also 
the entropy difference of the first step in the
q-CLT. Moreover, 
for  a pure state $\psi$, the quantum-doubling constant 
$\delta_q[\psi]$ is equal to the magic entropy 
$ME(\psi)=S(\psi\boxtimes\psi)$  defined in \cite{BGJ23c} up 
to a logarithm.

\begin{Rem}
    Similar to the quantum-doubling constant, we can also 
    define the quantum-difference constant is 
\begin{eqnarray}
        \delta^{-}_q[\rho]
    =\exp(S(\rho\boxminus\rho)-S(\rho))\;,
\end{eqnarray}
where $\rho\boxminus\rho=\Ptr{A}{U_{s,t}(\rho\ot\rho) U^\dag_{s,t}}$, i.e., the complementary channel 
of the quantum channel $\boxtimes$.
\end{Rem}

In this work, we consider the following problem:
given a quantum state $\rho$, how could the quantum-doubling constant $\delta_q[\rho]$ tell the structure of the 
state $\rho$, that is, how close the state is to the set of MSPSs.
We call this the quantum inverse sumset problem. Here, we focus on the pure state case, for which we have the following result.

\begin{thm}[\bf Quantum inverse sumset theorem using magic gap]\label{thm:QIST}
     Given an $n$-qudit pure state $\psi$, 

    (1) $\delta_q[\psi]\geq 1$, with equality iff $\psi\in \text{STAB}$. 

    (2) If $1<\delta_q[\psi]\leq C$, 
    then 
    \begin{eqnarray}\label{240929eq2}
        D(\psi||\mathcal{M}(\psi))\leq \frac{\log R(\psi)}{\log R(\psi)- \log[1+\lambda(R(\psi)-1)]}\log C,
    \end{eqnarray}
where $\lambda= (1-MG(\rho))^2$.
\end{thm}
\begin{proof}
    (1) It comes directly from the entropy inequality  for quantum convolution in \cite{BGJ23a,BGJ23b} (see Proposition 54 and Corollary 59  in \cite{BGJ23b}).

    (2) First, 
  by the monotonicity of relative entropy under the quantum channel, there exists some 
    factor $\kappa_{\psi}\leq 1$ such that
    \begin{eqnarray*}
        D(\psi\boxtimes\psi||\mathcal{M}(\psi)\boxtimes\psi)
        =D(\psi||\mathcal{M}(\psi))\kappa_{\psi}.
    \end{eqnarray*}
By  Lemma~\ref{lem:ent_equ} and the fact that $S(\mathcal{M}(\psi)\boxtimes\psi)=S(\mathcal{M}(\psi))$, it can 
be rewritten as 
    \begin{eqnarray*}
       S(\mathcal{M}(\psi))-S(\psi\boxtimes\psi)
        =[S(\mathcal{M}(\psi))-S(\psi)]\kappa_{\psi}.
    \end{eqnarray*}
    Since $\delta_q[\psi]>1$, $ \psi$ is not a stabilizer state, and thus $\kappa_{\psi}<1$. 
    Hence 
    \begin{eqnarray*}
        S(\psi\boxtimes\psi)-S(\psi)=(1-\kappa_{\psi})[S(\mathcal{M}(\psi))-S(\psi)]\;,
    \end{eqnarray*}
    which, by Theorem 17 in \cite{BGJ23b}, implies that 
    \begin{eqnarray*}
        D(\psi||\mathcal{M}(\psi))=
        \frac{1}{1-\kappa_{\psi}}\left[S(\psi\boxtimes\psi)-S(\psi)\right]
        \leq \frac{1}{1-\kappa_{\psi}}\log C.
    \end{eqnarray*}
Now, let us provide an upper bound on the factor $\kappa_{\psi}$ using magic gap.
    \begin{align*}
    \kappa_{\psi}
    =&\frac{D(\psi\boxtimes\psi||\mathcal{M}(\psi))}{D(\psi||\mathcal{M}(\psi))}
    =\frac{S(\mathcal{M}(\psi))-S(\psi\boxtimes\psi)}{S(\mathcal{M}(\psi))}\\
    \leq& \frac{S(\mathcal{M}(\psi))-S_2(\psi\boxtimes\psi)}{S(\mathcal{M}(\psi))},
\end{align*}
where the inequality comes from the fact that $S_{\alpha}$ is nonincreasing with respect to $\alpha$. 

Let us assume that $G_{\psi}$ is the stabilizer group of $\psi$, then 
$R(\psi)=\frac{d^n}{|G_{\psi}|}$, and thus $S(\mathcal{M}(\psi))=\log R(\psi)=\log\frac{d^n}{|G_{\psi}|} $. 
Hence
\begin{align*}
    &S(\mathcal{M}(\psi))-S_2(\psi\boxtimes\psi)\\
    =&\log \frac{d^n}{|G_{\psi}|}
    +\log \left(\frac{|G_{\psi}|}{d^n}+\frac{1}{d^n}\sum_{\vec x \notin G}|\Xi_{\psi}(s\vec x)|^2 |\Xi_{\psi}(t\vec x)|^2\right)\\
    \leq& \log \frac{d^n}{|G_{\psi}|}
    +\log \left(\frac{|G_{\psi}|}{d^n}+\frac{\lambda}{d^n}\sum_{\vec x \notin G}|\Xi_{\psi}(\vec x)|^2\right)\\
    =& \log \frac{d^n}{|G_{\psi}|}
    +\log \left[\frac{|G_{\psi}|}{d^n}+\lambda\left(1-\frac{|G_{\psi}|}{d^n}\right)\right]\\
    =&\log\left[1+\lambda\left(\frac{d^n}{|G_{\psi}|}-1\right)
    \right],
\end{align*}
where the third line comes from the definition of $\lambda$, and the fourth line comes from the fact that 
\begin{align}
       1= \Tr{\psi^2}
   =\frac{1}{d^n}\sum_{\vec x}|\Xi_{\psi}(\vec x)|^2
   =\frac{|G_{\psi}|}{d^n}+\frac{1}{d^n}\sum_{\vec x\notin G_{\psi}}|\Xi_{\psi}(\vec x)|^2.
\end{align}

\end{proof}

\begin{Rem}
In the aforementioned results, we explore the inverse sumset theorem for pure states. When considering mixed states $\rho$,
the inequality in (1) 
of Theorem \ref{thm:QIST} remains valid,
which takes equality if and only if $\rho$ is an MSPS (see the property (1) in Proposition~\ref{thm:QRD} ).
In addition, we also expect that the property (2) in Theorem \ref{thm:QIST} holds, which  has the following form
\begin{align}\nonumber&D(\rho||\mathcal{M}(\rho))\\
\label{eq:new}\leq& \frac{\log R(\rho)}{\log R(\rho)- \log[1+\lambda(R(\rho)\trace[\rho^2]-1)]}\log C.
\end{align}

However,
the method 
to prove the property (2) 
in Theorem \ref{thm:QIST} 
only yields the following inequality:
\begin{align}
  \nonumber  &D(\rho||\mathcal{M}(\rho))\\
    \leq& \frac{\log R(\rho)-S(\rho)}{\log R(\rho)- \log[1+\lambda(R(\rho)\trace[\rho^2]-1)]-S(\rho)}\log C,
\end{align}
when $\log R(\rho)- \log[1+\lambda(R(\rho)\trace[\rho^2]-1)]-S(\rho)>0$.  This inequality is weaker than \eqref{eq:new} that we expected, as
$\frac{x}{y}\leq \frac{x-a}{y-a}$ for $x\geq y>a\geq 0$.
Therefore, the above method 
cannot provide a good estimate  for mixed states, which may require new techniques. We leave it for a future study.
\end{Rem}

Now, let us consider the properties of quantum-doubling constant.
\begin{cor}\label{prop:QDC}
Given an $n$-qudit state $\rho$, the quantum-doubling constant satisfies the following properties:

(1) {\bf Positivity:}  $\delta_q[\rho]\geq 1$, with equality iff $\rho\in \text{MSPS}$. 

    (2) {\bf Additivity under tensor product:} $\delta_q[\rho_1\ot\rho_2]=\delta_q[\rho_1]\delta_q[\rho_2]$.

    (3) {\bf Invariance under Clifford unitary:} $\delta_q\left[U\rho U^\dag\right]=\delta_q[\rho]$ for any Clifford unitary $U$.

    (4) {\bf Monotonicity under partial trace:} $\delta_q\left[\Ptr{i}{\rho}\right]\leq \delta_q[\rho]$, where $\Ptr{i}{\cdot}$ denotes the 
    partial trace on the $i$-th qudit for any $i\in [n]$.
    
\end{cor}

\begin{proof}
    These results come directly from the properties of quantum Ruzsa divergence in Theorem~\ref{thm:QRD}.
\end{proof}

The above properties suggest that the quantum-doubling constant can serve as a measure of magic. Moreover, compared to the 
quantum Ruzsa divergence of magic $M_{Rz}$ in Definition \ref{Def:MRZ}, which requires the minimization of all stabilizer states, the quantum-doubling constant does not require such optimization.  The absence of optimization will make it easier to 
compute than $M_{Rz}$.

In general, it is hard to compare the quantum Rusza divergence of magic $M_{Rz}$  and quantum-doubling constant $\delta_q$, as  $M_{Rz}$ involves the optimization over all stabilizer states, and $D_{Rz}(\rho||\sigma)$ is not symmetric 
with respect to $\rho$ and $\sigma$. However, we can get some nice relationship between  $M_{Rz}$ and $\delta_q$
under certain conditions and assumptions, as shown in the following result. 

\begin{prop}
 Consider the quantum convolution $\boxtimes$ defined by the balanced beam splitter, i.e.,  $s\equiv t\mod d$, and assume that 
the Conjecture \ref{Conj:sub_RZ} holds. Then we have the following relationship for any $n$-qudit pure state $\psi$, 
\begin{align}
    \log \delta_q(\psi)\leq 2M_{Rz}(\psi).
\end{align}
\end{prop}
\begin{proof}
First, by the concavity of quantum entropy, the minimization in $ \min_{\sigma\in\text{STAB}}S(\rho\boxtimes\sigma)-S(\rho)$
is taken over pure stabilizer states. Hence, for pure state $\psi$, 
$M_{Rz}(\psi)=\min_{\ket{\phi}\in STAB}S(\psi\boxtimes\phi)$. 
Moreover, since the quantum convolution $\boxtimes$ is defined via the balanced beam splitter,
then 
\begin{align}
    \psi\boxtimes\phi=\phi\boxtimes\psi.
\end{align}
Because
the characteristic functions $ \Xi_{\psi\boxtimes\phi}$ and $ \Xi_{\phi\boxtimes\psi}$ are equal as
\begin{align*}
    \Xi_{\psi\boxtimes\phi}(\vec x)=\Xi_{\psi}(s\vec x)\Xi_{\phi}(s\vec x),\\
        \Xi_{\phi\boxtimes\psi}(\vec x)=\Xi_{\phi}(s\vec x)\Xi_{\psi}(s\vec x).
\end{align*}
Hence, there exists a pure stabilizer state $\phi_0$ such that 
\begin{align}
    M_{Rz}(\psi)= S(\psi\boxtimes\phi_0)=S(\phi_0\boxtimes\psi).
\end{align}

Moreover, by applying \eqref{ineq:tri_equiv} in Conjecture \ref{Conj:sub_RZ} 
with  $\rho=\psi, \tau=\psi, \sigma=\phi_0$,
we have 
\begin{align}
    S(\psi\boxtimes\psi)
    \leq S(\psi\boxtimes\phi_0)+S(\phi_0\boxtimes\psi)=2M_{Rz}(\psi).
\end{align}
This completes the proof.

\end{proof}

Note that the quantum-doubling constant is defined for qudit systems. To extend this concept to the qubit case,
we  introduce "quantum tripling constant" by using the quantum convolution $\boxtimes_3$ on three input states, i.e., choosing
$K=3$ in the Definition~\ref{Def:conv_qubit}.

\begin{Def}[\bf Quantum tripling constant]
Given an $n$-qubit state $\rho$, the quantum tripling constant is 
\begin{eqnarray}
    \tilde{\delta}_q[\rho]
    =S(\boxtimes_3\rho)-S(\rho),
\end{eqnarray}
where the quantum convolution $\boxtimes_3$ is defined in \eqref{eq:qub_con}.
\end{Def}

Similar to the qudit case, we also have the following result for the $n$-qubit pure state.

\begin{prop}[\bf Quantum inverse sumset theorem for qubits]
     Given an $n$-qubit pure state $\psi$, 

    (1) $ \tilde{\delta}_q[\psi]\geq 1$, with equality iff $\psi\in \text{STAB}$. 

    (2) If $1< \tilde{\delta}_q[\psi]\leq C$, 
    then 
    \begin{align}
        D(\psi||\mathcal{M}(\psi))\leq \frac{\log R(\psi)}{\log R(\psi)- \log[1+\lambda^2(R(\psi)-1)]}\log C,
    \end{align}
    where $\lambda= (1-MG(\rho))^2$.

\end{prop}
\begin{proof}
    The proof is similar to that of the qudit case.
\end{proof}

Note that, the properties of the quantum-doubling constant in qudits in Proposition \ref{prop:QDC} also hold for 
the quantum-tripling constant in qubits by using the properties of $\boxtimes_3$ in \cite{BGJ23c}.

\section{Conclusion}

In this work, we have introduced the 
quantum Ruzsa divergence
to study the stabilizer structure of quantum states.
By using quantum Ruzsa divergence, 
we propose two  new magic measures, quantum Ruzsa divergence of magic and 
quantum-doubling constant,  to quantify the amount of magic in quantum states.
 We also pose and study an interesting conjecture called ``convolutional strong subadditivity''.

There are still many interesting problems to solve within the current framework. We list some of them here:

(1) We have proven convolutional strong subadditivity for two specific cases: when all the input states are either diagonal or stabilizer states. Can this inequality be extended to hold for any input states? Moreover,
can one generalize other classical sumset and inverse sumset results to 
our quantum convolutional framework? For example, what is the quantum version of the polynomial Freiman–Ruzsa conjecture in this setting? 
The  interesting recent work of Gowers and collaborators~\cite{gowers2023conjecture} may be helpful.

(2)
One can also generalize the quantum Ruzsa divergence to bosonic quantum systems by using the CV beam splitter as the quantum convolution. Define the 
bosonic Ruzsa divergence as 
$D^B_{Rz}(\rho||\sigma)=S(\rho\boxplus_\lambda\sigma)-S(\rho)$  in a similar way as Definition~\ref{Def:QRD}
and the bosonic quantum-doubling constant as $\delta_{B}(\rho)=S(\rho\boxplus_\lambda\rho)-S(\rho)$ in a similar way as Definition~\ref{def:QDC}, where $\boxplus_\lambda$ is
the bosonic convolution defined via beam splitter~\cite{Konig13,Konig14,Palma14}. Then we can use the bosonic
Ruzsa divergence to 
 design new non-Gaussian measures in the resource theory of non-Gaussianity~\cite{TakagiPRA18,ZhuangPRA18}. We plan to develop these ideas within the  CV setting in  future work. 

(3) One important application of quantum Ruzsa divergence in this work is the new design of magic measures to quantify the amount of magic of states. Can we find more physical interpretations and applications of these
mathematical results?
\section*{Acknowledgment}

The authors would like to thank Michael Freedman, Yichen Hu, Bryna Kra, Yves Hon Kwan, Xiang Li, Elliott Lieb, Freddie Manners, Graeme Smith, and Yufei Zhao for the helpful discussion.
This work was supported in part by the ARO Grant W911NF-19-1-0302 and the ARO
MURI Grant W911NF-20-1-0082.

\ifCLASSOPTIONcaptionsoff
  \newpage
\fi

\begin{IEEEbiographynophoto}{Kaifeng Bu}
 Kaifeng Bu got his B.S. and Ph.D. from Zhejiang University in 2014 and 2019, respectively, and was a postdoctoral researcher at Harvard University from 
 2019 to 2024. He is now an assistant professor in the Department of Mathematics at Ohio State University. His research focuses on the advantages of quantum computing and quantum information processing, as well as on the interplay of quantum information with  computer science, physics, and mathematics.
\end{IEEEbiographynophoto}

\begin{IEEEbiographynophoto}{Weichen Gu}
Weichen Gu received a B.S. degree in mathematics from the University of Science and Technology of China in 2014, an M.S. degree in mathematics from the University of Chinese Academy of Sciences in 2017,
and a Ph.D. degree in mathematics from the University of New Hampshire in 2024.
He is now a postdoctoral researcher  in the Department of Mathematics at Ohio State University. His research interests include quantum information theory, quantum computing, operator algebra, number theory, and combinatorics.
\end{IEEEbiographynophoto}

\begin{IEEEbiographynophoto}{Arthur Jaffe}
Arthur Jaffe is the Landon T. Clay Professor of Mathematics and Theoretical Science
at Harvard University. His past work on physics and mathematics includes giving the first mathematical examples that combine special relativity, quantum theory, and interaction.  His recent research focuses on quantum information.  He was a founder and first president of the Clay Mathematics Institute,  president of the International Association of Mathematical Physics,  and president of
the American Mathematical Society.  He is a Fellow of the Hagler Research Institute at Texas A\&M University. 
He is a member of the U.S. National Academy of Sciences, a fellow of the American Academy of Arts and Sciences, and an Honorary Member of the Royal Irish Academy. 
\end{IEEEbiographynophoto}







\begin{thebibliography}{1}


\bibitem{tao2006additive}
Terence Tao and Van~H Vu.
\newblock {\em Additive Combinatorics}, volume 105.
\newblock Cambridge University Press, 2006.

\bibitem{tao_2010}
Terence Tao.
\newblock Sumset and inverse sumset theory for {S}hannon entropy.
\newblock {\em Combinatorics, Probability and Computing}, 19(4):603–639,
  2010.

\bibitem{KontoyiannisIEEE14}
Ioannis Kontoyiannis and Mokshay Madiman.
\newblock Sumset and inverse sumset inequalities for differential entropy and
  mutual information.
\newblock {\em IEEE Transactions on Information Theory}, 60(8):4503--4514,
  2014.

\bibitem{MadimanIEEE18}
Mokshay Madiman and Ioannis Kontoyiannis.
\newblock Entropy bounds on abelian groups and the {R}uzsa divergence.
\newblock {\em IEEE Transactions on Information Theory}, 64(1):77--92, 2018.

\bibitem{JiangLiuWu19}
Chunlan Jiang, Zhengwei Liu, and Jinsong Wu.
\newblock Block maps and {F}ourier analysis.
\newblock {\em Science China Mathematics}, 62(8):1585--1614, Aug 2019.

\bibitem{green2023sumsets}
Ben Green, Freddie Manners, and Terence Tao.
\newblock Sumsets and entropy revisited.
\newblock {\em arXiv preprint arXiv:2306.13403}, 2023.

\bibitem{gowers2023conjecture}
W.~T. Gowers, Ben Green, Freddie Manners, and Terence Tao.
\newblock On a conjecture of {M}arton.
\newblock {\em arXiv preprint arXiv:2311.05762}, 2023.

\bibitem{Ruzsa09}
Imre~Z. Ruzsa.
\newblock Sumsets and entropy.
\newblock {\em Random Structures \& Algorithms}, 34(1):1--10, 2009.

\bibitem{Madiman08}
Mokshay Madiman.
\newblock On the entropy of sums.
\newblock In {\em 2008 IEEE Information Theory Workshop}, pages 303--307, 2008.

\bibitem{MadimanRandom12}
Mokshay Madiman, Adam~W. Marcus, and Prasad Tetali.
\newblock Entropy and set cardinality inequalities for partition-determined
  functions.
\newblock {\em Random Structures \& Algorithms}, 40(4):399--424, 2012.

\bibitem{Gottesman97}
D.~Gottesman.
\newblock Stabilizer codes and quantum error correction.
\newblock {\em arXiv:quant-ph/9705052}, 1997.

\bibitem{ShorPRA95}
Peter~W. Shor.
\newblock Scheme for reducing decoherence in quantum computer memory.
\newblock {\em Phys. Rev. A}, 52:R2493--R2496, Oct 1995.

\bibitem{Kitaev_toric}
Alexei Kitaev.
\newblock Fault-tolerant quantum computation by anyons.
\newblock {\em Annals of Physics}, 303(1):2--30, Jun 2003.

\bibitem{gottesman1998heisenberg}
D~Gottesman.
\newblock The {H}eisenberg representation of quantum computers.
\newblock In {\em Proc. XXII International Colloquium on Group Theoretical
  Methods in Physics, 1998}, pages 32--43, 1998.

\bibitem{BravyiPRL16}
Sergey Bravyi and David Gosset.
\newblock Improved classical simulation of quantum circuits dominated by
  {Clifford} gates.
\newblock {\em Phys. Rev. Lett.}, 116:250501, Jun 2016.

\bibitem{BravyiPRX16}
Sergey Bravyi, Graeme Smith, and John~A. Smolin.
\newblock Trading classical and quantum computational resources.
\newblock {\em Phys. Rev. X}, 6:021043, Jun 2016.

\bibitem{bravyi2019simulation}
Sergey Bravyi, Dan Browne, Padraic Calpin, Earl Campbell, David Gosset, and
  Mark Howard.
\newblock Simulation of quantum circuits by low-rank stabilizer decompositions.
\newblock {\em {Quantum}}, 3:181, September 2019.

\bibitem{BeverlandQST20}
Michael Beverland, Earl Campbell, Mark Howard, and Vadym Kliuchnikov.
\newblock Lower bounds on the non-{Clifford} resources for quantum
  computations.
\newblock {\em Quantum Sci. Technol.}, 5(3):035009, May 2020.

\bibitem{SeddonPRXQ21}
James~R. Seddon, Bartosz Regula, Hakop Pashayan, Yingkai Ouyang, and Earl~T.
  Campbell.
\newblock Quantifying quantum speedups: Improved classical simulation from
  tighter magic monotones.
\newblock {\em PRX Quantum}, 2:010345, Mar 2021.

\bibitem{bu2022classical}
Kaifeng Bu and Dax~Enshan Koh.
\newblock Classical simulation of quantum circuits by half {G}auss sums.
\newblock {\em Commun. Math. Phys.}, 390:471--500, Mar 2022.

\bibitem{gao2018efficient}
Xun Gao and Luming Duan.
\newblock Efficient classical simulation of noisy quantum computation.
\newblock {\em arXiv:1810.03176}, 2018.

\bibitem{Bu19}
Kaifeng Bu and Dax~Enshan Koh.
\newblock Efficient classical simulation of {C}lifford circuits with
  nonstabilizer input states.
\newblock {\em Phys. Rev. Lett.}, 123:170502, Oct 2019.

\bibitem{UmeshSTOC23}
Dorit Aharonov, Xun Gao, Zeph Landau, Yunchao Liu, and Umesh Vazirani.
\newblock A polynomial-time classical algorithm for noisy random circuit
  sampling.
\newblock {\em STOC}, page 945–957, Jun 2023.

\bibitem{koh2015further}
Dax~Enshan Koh.
\newblock Further extensions of {C}lifford circuits and their classical
  simulation complexities.
\newblock {\em Quantum Information \& Computation}, 17(3\&4):0262--0282, 2017.

\bibitem{BravyiPRA05}
Sergey Bravyi and Alexei Kitaev.
\newblock Universal quantum computation with ideal clifford gates and noisy
  ancillas.
\newblock {\em Phys. Rev. A}, 71:022316, Feb 2005.

\bibitem{BGJ23a}
Kaifeng Bu, Weichen Gu, and Arthur Jaffe.
\newblock Quantum entropy and central limit theorem.
\newblock {\em Proceedings of the National Academy of Sciences},
  120(25):e2304589120, 2023.

\bibitem{BGJ23b}
Kaifeng Bu, Weichen Gu, and Arthur Jaffe.
\newblock Discrete quantum {G}aussians and central limit theorem.
\newblock {\em arXiv:2302.08423}, 2023b.

\bibitem{BGJ23c}
Kaifeng Bu, Weichen Gu, and Arthur Jaffe.
\newblock Stabilizer testing and magic entropy.
\newblock {\em arXiv:2306.09292}, 2023c.

\bibitem{BJ24a}
Kaifeng Bu and Arthur Jaffe.
\newblock Magic resource can enhance the quantum capacity of channels.
\newblock {\em Phys. Rev. Lett.}, 134.050202, Feb 2025.

\bibitem{Audenaert16}
Koenraad Audenaert, Nilanjana Datta, and Maris Ozols.
\newblock Entropy power inequalities for qudits.
\newblock {\em J. Math. Phys.}, 57(5):052202, 2016.

\bibitem{voiculescu2016free}
Dan Voiculescu, Nicolai Stammeier, and Moritz Weber.
\newblock {\em Free probability and operator algebras}.
\newblock European Mathematical Society, 2016.

\bibitem{Evered2023high}
Simon~J Evered, Dolev Bluvstein, Marcin Kalinowski, Sepehr Ebadi, Tom Manovitz,
  Hengyun Zhou, Sophie~H Li, Alexandra~A Geim, Tout~T Wang, Nishad Maskara,
  et~al.
\newblock High-fidelity parallel entangling gates on a neutral-atom quantum
  computer.
\newblock {\em Nature}, 622(7982):268--272, 2023.

\bibitem{Bu2024a}
Kaifeng Bu.
\newblock Extremality of stabilizer states.
\newblock {\em arXiv:2403.13632}, 2024.

\bibitem{Linnik59}
Ju.~V. Linnik.
\newblock An information-theoretic proof of the central limit theorem with
  {L}indeberg conditions.
\newblock {\em Theory of Probability \& Its Applications}, 4(3):288--299, 1959.

\bibitem{Barron86}
Andrew Barron.
\newblock Entropy and the central limit theorem.
\newblock {\em Ann. Probab.}, 14(1):336--342, Sep 1986.

\bibitem{artstein2004JAMS}
Shiri Artstein, Keith Ball, Franck Barthe, and Assaf Naor.
\newblock Solution of {S}hannon’s problem on the monotonicity of entropy.
\newblock {\em J. Amer. Math. Soc.}, 17(4):975--982, 2004.

\bibitem{artstein2004PTRF}
Shiri Artstein, Keith Ball, Franck Barthe, and Assaf Naor.
\newblock Solution of {S}hannon’s problem on the monotonicity of entropy.
\newblock {\em Probability Theory and Related Fields}, 129(3):381--390, 2004.

\bibitem{Barron04}
Oliver Johnson and Andrew Barron.
\newblock Entropy and the central limit theorem.
\newblock {\em Probability Theory and Related Fields}, 129(3):391--409, Sep
  2004.

\bibitem{Cushen71}
C.~Cushen and R.~Hudson.
\newblock A quantum-mechanical central limit theorem.
\newblock {\em J. Appl. Probab.}, 8(3):454--469, Feb 1971.

\bibitem{Lieb73}
K.~Hepp and E.H. Lieb.
\newblock Phase-transitions in reservoir-driven open systems with applications
  to lasers and superconductors.
\newblock {\em Helv. Phys. Acta}, 46(5):573–603, Feb 1973.

\bibitem{Lieb1973}
K.~Hepp and E.H. Lieb.
\newblock On the superradiant phase transition for molecules in a quantized
  radiation field: the {D}icke maser model.
\newblock {\em Ann. Phys.}, 76(2):360--404, Feb 1973.

\bibitem{Giri78}
N.~Giri and W.~von Waldenfels.
\newblock An algebraic version of the central limit theorem.
\newblock {\em Probab. Theory Relat. Fields}, 42(2):129--134, June 1978.

\bibitem{Goderis89}
D.~Goderis and P.~Vets.
\newblock Central limit theorem for mixing quantum systems and the ccr-algebra
  of fluctuations.
\newblock {\em Commun. Math. Phys.}, 122(2):249--265, June 1978.

\bibitem{Matsui02}
T.~Matsui.
\newblock Bosonic central limit theorem for the one-dimensional xy model.
\newblock {\em Rev. Math. Phys.}, 14(07n08):675–700, June 2002.

\bibitem{Cramer10}
M.~Cramer and J.~Eisert.
\newblock A quantum central limit theorem for non-equilibrium systems: exact
  local relaxation of correlated states.
\newblock {\em New J. Phys.}, 12(5):055020, May 2010.

\bibitem{Jaksic09}
V.~Jaksic, Y.~Pautrat, and C.-A. Pille.
\newblock Central limit theorem for locally interacting {F}ermi gas.
\newblock {\em Commun. Math. Phys.}, 285(1):175--217, May 2009.

\bibitem{Arous13}
Gérard~Ben Arous, Kay Kirkpatrick, and Benjamin Schlein.
\newblock A central limit theorem in many-body quantum dynamics.
\newblock {\em Commun. Math. Phys.}, 321(2):371--417, July 2013.

\bibitem{Michoel04}
Tom Michoel and Bruno Nachtergaele.
\newblock Central limit theorems for the large-spin asymptotics of quantum
  spins.
\newblock {\em Probab. Theory Relat. Fields}, 130(4):493--517, Dec 2004.

\bibitem{GoderisPTRT89}
D.~Goderis, A.~Verbeure, and P.~Vets.
\newblock Non-commutative central limits.
\newblock {\em Probab. Theory Relat. Fields}, 82(4):527--544, Aug 1989.

\bibitem{JaksicJMP10}
V.~Jakšić, Y.~Pautrat, and C.-A. Pillet.
\newblock A quantum central limit theorem for sums of independent identically
  distributed random variables.
\newblock {\em J. Math. Phys.}, 51(1):015208, 2010.

\bibitem{Accardi94}
L.~Accardi and Y.~G. Lu.
\newblock Quantum central limit theorems for weakly dependent maps. ii.
\newblock {\em Acta Math. Hung.}, 63(3):249--282, Sep 1994.

\bibitem{Liu16}
Zhengwei Liu.
\newblock Exchange relation planar algebras of small rank.
\newblock {\em Transactions of the American Mathematical Society},
  368(12):8303--8348, Mar 2016.

\bibitem{Hayashi09}
M.~Hayashi.
\newblock Quantum estimation and the quantum central limit theorem.
\newblock {\em Am. Math. Soc. Trans. Ser.}, 2(227):95–123, Sep 2009.

\bibitem{CampbellPRA13}
Earl~T. Campbell, Marco~G. Genoni, and Jens Eisert.
\newblock Continuous-variable entanglement distillation and noncommutative
  central limit theorems.
\newblock {\em Phys. Rev. A}, 87:042330, Apr 2013.

\bibitem{BekerCMP21}
Simon Becker, Nilanjana Datta, Ludovico Lami, and Cambys Rouzé.
\newblock Convergence rates for the quantum central limit theorem.
\newblock {\em Commun. Math. Phys.}, 383(1):223--279, Apr 2021.

\bibitem{Carbone22}
Raffaella Carbone, Federico Girotti, and Anderson Melchor~Hernandez.
\newblock On a generalized central limit theorem and large deviations for
  homogeneous open quantum walks.
\newblock {\em Journal of Statistical Physics}, 188(1):8, 2022.

\bibitem{beigi2023optimal}
Salman Beigi and Hami Mehrabi.
\newblock Towards optimal convergence rates for the quantum central limit
  theorem.
\newblock {\em arXiv preprint arXiv:2310.09812}, 2023.

\bibitem{Veitch14}
Victor Veitch, S~A~Hamed Mousavian, Daniel Gottesman, and Joseph Emerson.
\newblock The resource theory of stabilizer quantum computation.
\newblock {\em New Journal of Physics}, 16(1):013009, Jan 2014.

\bibitem{LiebJMP73}
Elliott~H. Lieb and Mary~Beth Ruskai.
\newblock {Proof of the strong subadditivity of quantum‐mechanical entropy}.
\newblock {\em Journal of Mathematical Physics}, 14(12):1938--1941, 11 1973.

\bibitem{lyu2024fermionic}
Xingjian Lyu and Kaifeng Bu.
\newblock Fermionic gaussian testing and non-gaussian measures via convolution.
\newblock {\em arXiv preprint arXiv:2409.08180}, 2024.

\bibitem{Veitch12mag}
Victor Veitch, Christopher Ferrie, David Gross, and Joseph Emerson.
\newblock Negative quasi-probability as a resource for quantum computation.
\newblock {\em New J. Phys.}, 14(11):113011, Nov 2012.

\bibitem{HowardPRL17}
Mark Howard and Earl Campbell.
\newblock Application of a resource theory for magic states to fault-tolerant
  quantum computing.
\newblock {\em Phys. Rev. Lett.}, 118:090501, Mar 2017.

\bibitem{BuPRA19_stat}
Kaifeng Bu, Dax~Enshan Koh, Lu~Li, Qingxian Luo, and Yaobo Zhang.
\newblock Statistical complexity of quantum circuits.
\newblock {\em Phys. Rev. A}, 105:062431, Jun 2022.

\bibitem{LeonePRL22}
Lorenzo Leone, Salvatore F.~E. Oliviero, and Alioscia Hamma.
\newblock Stabilizer {R}\'enyi entropy.
\newblock {\em Phys. Rev. Lett.}, 128:050402, Feb 2022.

\bibitem{Gross06}
D.~Gross.
\newblock Hudson's theorem for finite-dimensional quantum systems.
\newblock {\em J. Math. Phys.}, 47(12):122107, 2006.

\bibitem{montanaro2010quantum}
Ashley Montanaro and Tobias~J. Osborne.
\newblock Quantum {B}oolean functions.
\newblock {\em Chicago Journal of Theoretical Computer Science}, 2010(1),
  January 2010.

\bibitem{Gottesman96}
Daniel Gottesman.
\newblock Class of quantum error-correcting codes saturating the quantum
  {H}amming bound.
\newblock {\em Phys. Rev. A}, 54:1862--1868, Sep 1996.

\bibitem{garcia2014geometry}
H\'{e}ctor~J. Garc\'{\i}a, Igor~L. Markov, and Andrew~W. Cross.
\newblock On the geometry of stabilizer states.
\newblock {\em Quantum Info. Comput.}, 14(7 \& 8):683–720, May 2014.

\bibitem{Koh2017computing}
Dax~E. Koh, Mark~D. Penney, and Robert~W. Spekkens.
\newblock Computing quopit clifford circuit amplitudes by the sum-over-paths
  technique.
\newblock {\em Quantum Info. Comput.}, 17(13–14):1081–1095, November 2017.

\bibitem{audenaert2007sharp}
Koenraad~MR Audenaert.
\newblock A sharp continuity estimate for the von neumann entropy.
\newblock {\em Journal of Physics A: Mathematical and Theoretical},
  40(28):8127, 2007.

\bibitem{petz2007quantum}
D{\'e}nes Petz.
\newblock {\em Quantum information theory and quantum statistics}.
\newblock Springer Science \& Business Media, 2007.

\bibitem{RobinsonCMP67}
Derek~W. Robinson and David Ruelle.
\newblock Mean entropy of states in classical statistical mechanics.
\newblock {\em Communications in Mathematical Physics}, 5(4):288--300, 1967.

\bibitem{Lanford68}
Oscar~E. Lanford and Derek~W. Robinson.
\newblock {Mean entropy of states in quantum‐statistical mechanics}.
\newblock {\em Journal of Mathematical Physics}, 9(7):1120--1125, 10 1968.

\bibitem{cover1999elements}
Thomas~M Cover.
\newblock {\em Elements of information theory}.
\newblock John Wiley \& Sons, 1999.

\bibitem{CerfPRL97}
N.~J. Cerf and C.~Adami.
\newblock Negative entropy and information in quantum mechanics.
\newblock {\em Phys. Rev. Lett.}, 79:5194--5197, Dec 1997.

\bibitem{horodecki2005partial}
Micha{\l} Horodecki, Jonathan Oppenheim, and Andreas Winter.
\newblock Partial quantum information.
\newblock {\em Nature}, 436(7051):673--676, 2005.

\bibitem{Konig13}
Robert König and Graeme Smith.
\newblock Limits on classical communication from quantum entropy power
  inequalities.
\newblock {\em Nature Photon}, 7(2):142--146, 2013.

\bibitem{Konig14}
Robert König and Graeme Smith.
\newblock The entropy power inequality for quantum systems.
\newblock {\em IEEE Trans. Inform. Theory}, 60(3):1536--1548, 2014.

\bibitem{Palma14}
G.~De~Palma, A.~Mari, and V.~Giovannetti.
\newblock A generalization of the entropy power inequality to bosonic quantum
  systems.
\newblock {\em Nature Photon}, 8(3):958–964, 2014.

\bibitem{TakagiPRA18}
Ryuji Takagi and Quntao Zhuang.
\newblock Convex resource theory of non-gaussianity.
\newblock {\em Phys. Rev. A}, 97:062337, Jun 2018.

\bibitem{ZhuangPRA18}
Quntao Zhuang, Peter~W. Shor, and Jeffrey~H. Shapiro.
\newblock Resource theory of non-gaussian operations.
\newblock {\em Phys. Rev. A}, 97:052317, May 2018.

\end{thebibliography}
\end{document}